\documentclass[12pt]{article}
\topmargin=-1cm \textheight=24cm \textwidth=16cm \oddsidemargin=-1cm
\usepackage{amsmath,amssymb}
\usepackage{amsthm}

\def\LC{\mathcal{L}}

\def\HC{\mathcal{H}}

\def\C{\mathbf{C}}
\def\E{\mathbf{E}}

\def\N{\mathbf{N}}
\def\P{\mathbf{P}}
\def\R{\mathbf{R}}
\def\Z{\mathbf{Z}}

\def\1{\mathbf{1}}

\def\tr{\rm{tr}}

\def\al{\alpha}
\def\be{\beta}
\def\pa{\partial}
\def\ep{\epsilon}
\def\de{\delta}
\def\ga{\gamma}
\def\ka{\varkappa}

\newtheorem{prop}{Proposition}[section]
\newtheorem{theorem}{Theorem}[section]
\newtheorem{lemma}{Lemma}[section]

\newtheorem{remark}{Remark}

\newcommand{\la}{\lambda}
\newcommand{\si}{\sigma}

\newcommand{\om}{\omega}
\newcommand{\Ga}{\Gamma}
\newcommand{\De}{\Delta}

\newcommand{\Om}{\Omega}

\begin{document}
\title{Continuous time random walks modeling of quantum measurement
and fractional equations of quantum stochastic filtering and control}

\author{Vassili N. Kolokoltsov\thanks{Department of Statistics, University of Warwick,
 Coventry CV4 7AL UK, associate member of HSE, Moscow,
  Email: v.kolokoltsov@warwick.ac.uk}}
\maketitle
	
\begin{abstract}
Initially developed in the framework of quantum stochastic calculus, the main equations of quantum
 stochastic filtering were later on derived as the limits of Markov models of discrete measurements
 under appropriate scaling. In many branches of modern physics it became popular to extend random
 walk modeling to the continuous time random walk (CTRW) modeling, where the time between discrete
events is taken to be non-exponential. In the present paper we apply the CTRW modeling to the
continuous quantum measurements yielding the new fractional in time evolution equations of quantum
filtering and thus new fractional equations of quantum mechanics of open systems. The related quantum
 control problems and games turn out to be described by the fractional Hamilton-Jacobi-Bellman (HJB)
 equations on Riemannian manifolds. By-passing we provide a full derivation of the standard quantum
 filtering equations, in a modified way as compared with existing texts, which (i) provides
explicit rates of convergence (that are not available via the tightness of martingales approach
developed previously) and (ii)  allows for the direct applications of the basic results of CTRWs to
 deduce the final fractional filtering equations.
\end{abstract}

{\bf Key words:} CTRW, quantum stochastic filtering, fractional quantum control, Belavkin equation, fractional quantum mechanics,
fractional quantum mean field games, fractional Hamilton-Jacobi-Bellman-Isaacs equation on manifolds.

{\bf MSC2010:} 35R11, 81Q93, 93E11, 93E20.

\section{Introduction}

Direct continuous observations are known to destroy quantum evolutions
 (so-called quantum Zeno paradox), so that continuous quantum measurements have to be indirect,
and the results of the observation are assessed via quantum filtering. Initially developed in the framework
of quantum stochastic calculus by Belavkin in the 80s of the last century in \cite{Bel87},
 \cite{Bel88}, \cite{Bel92}, see \cite{BoutHanJamQuantFilt} for a readable modern account,
the main equations of quantum stochastic filtering, often referred
to as the Belavkin equations, were later on derived
via more elementary approach, as the limit of standard discrete measurements under appropriate scaling,
see e.g. \cite{Be195}, \cite{BelKol}, \cite{Pellegrini}.
The scaling arises from the basic Markovian assumption that the times between measurement are either fixed or
exponentially distributed, like in a standard random walk. Since such Markovian assumption has no a priori
justification, in many branches of modern physics it became popular
to extend random walk modeling to the continuous time random walk (CTRW) modeling, where the time between
discrete events is taken to be non-exponential, usually from the domain of attraction of a stable law.
In the present paper we apply the CTRW modeling to the continuous quantum measurements yielding
 the new fractional in time evolution equations of quantum filtering in the scaling limit. The related quantum
 control problems turn out to be described by the fractional Hamilton-Jacobi-Bellman (HJB) equations on
 Riemannian manifolds (complex projective spaces in the case of finite-dimensional quantum mechanics)
 or the fractional Isaacs equation in the case of competitive control. By-passing
 we provide a full derivation of the standard quantum filtering equations
 (explaining from scratch all underlying quantum mechanical rules used)
  in a slightly modified and simplified way
yielding also new explicit rates of convergence (which are not available via
 the tightness of martingales approach developed previously) and tailored in a way that allows for the direct
 applications of the basic results of CTRWs to deduce the final fractional filtering equations.

Several general comments on a wider context are in order.

(i) The fractional equations of quantum stochastic filtering derived here can be considered as an alternative
formulation of fractional quantum mechanics, which is different from the framework of fractional
Schr\"odinger equations suggested in  \cite{Lask02} and extensively studied recently. This leads also
to a different class of quantum control problems, as those related to fractional Schr\"odinger formulation,
as discussed e. g. in \cite{Wang12}.

(ii) The fractional versions of the classical stochastic filtering (see \cite{BainCrisbook} for the basics)
has been  actively studied recently, see e.g. \cite{Umarov14}.

(iii) The quantum mean-field games as developed by the author in \cite{KolquaMFG} can now be extended to the
theory of fractional quantum mean-field games. The classical versions of fractional mean-field games just
started to appear in the literature, see \cite{Camilli19}.  On the other hand, the application
of classical stochastic filtering in the study of mean-field games has also started to appear, see \cite{CainesFilt}.

(iv) Fractional modeling and CTRW become very popular in almost all domains of physics, as well as economics
and finances, see e.g. \cite{Scalabook}, \cite{MetzlerBark14},
\cite{UchaiBook13}, \cite{WestComplexBook} for some representative references.

The content of the paper is as follows. In Section \ref{secnot} we recall the basic notions
and notations of finite-dimensional quantum mechanics, and in Section \ref{secMarchainindirodserv}
we introduce the Markov chain of sequential indirect quantum measurements, which is the standard starting
point for dealing with continuous measurements. In Sections \ref{seccountcase} and \ref{secdifcase}
we derive the main quantum filtering equations in the cases of so-called counting and diffusive
observations. As was already mentioned, though the derivation of the filtering equations from the approximating Markov chain
is well known by now (see e. g. \cite{Pellegrini10a}) our approach is new and yields
explicit rates of convergence. In Section \ref{secseveralchan} the limiting equation is derived in a general case
of mixed counting and diffusive observations via a multichannel measuring device.
This preparatory work allows us to derive our main results, fractional equations of quantum filtering and control,
in a more or less straightforward way, by applying the established techniques of CTRW to the setting
of the Markov chains of sequential quantum measurements, as developed in Sections \ref{seccountcase}
- \ref{secseveralchan}. This is done in Sections \ref{secfraceq} and \ref{secfraceqHJB}.
In Section \ref{secunbound} we briefly describe a slightly different Markov chain approximation to
continuous measurement that can be used to derive filtering equations in certain cases of unbounded operators
involved.
 In Appendices A,B,C several (known) probabilistic techniques are presented
in a concise form tailored to our purposes. They are used in the main body of the paper.

Some basic notations to be used throughout the text are as follows.

For two Banach spaces $B$ and $D$ equipped with norms $\|.\|_B$ and  $\|.\|_B$ respectively,
let us denote by $\LC(D,B)$ the Banach space of
bounded linear operators in $B$ equipped with the usual operator norm $\|.\|_{D\to B}$.
We shall also write $\LC(B)$ for $\LC(B,B)$.

The scalar product of operators in a Hilbert space is given by the trace:
$(R,S)={\tr} (RS)$.

For $K=\R^d$ or a convex closed subset of $\R^d$ we denote $C(K)$ the Banach space of continuous
bounded functions on $K$, equipped with the sup-norm and $C^k(K)$ the Banach space of $k$ times
continuously differentiable functions on $K$ (with the derivatives at the boundary understood as
the continuous extensions of the derivatives in the inner points), with the norm being the sum
of the sup-norms of the functions and all their partial derivatives of order not exceeding $k$.

\section{Notations for quantum states and tensor products}
\label{secnot}

Recall that a general isolated quantum system is described by a Hilbert space $\HC$ and a self-adjoint operator $H$ in it,
the Hamiltonian. The pure states of the system are unit vectors in $\HC$ and the general mixed states are density matrices,
that is, non-negative operators in $\HC$ with unit trace. Let us denote $S(H)$ the set of all such mixed states in $H$.
 To a pure state there corresponds a density matrix according
to the rule $\psi \to \ga=\psi\otimes \bar \psi$, also denoted in Dirac's notation as $|\psi \rangle \langle \psi|$.
This density matrix is the on-dimensional orthogonal projector on the line generated by $\psi$.
Pure states evolve in time according to the rule $\psi \to e^{-itH} \psi$
and the mixed state according to the rule  $\ga \to e^{-itH} \ga e^{itH}$.

If two systems living in spaces $\HC_0$ and $\HC_1$ are brought to interaction, the combined system
has the tensor product Hilbert space  $\HC_0 \otimes \HC_1$ as the state space. Recall that,
in the coordinate description of tensor products, if $\HC_0$ and $\HC_1$
have orthonormal bases $\{e_j\}$ and $\{f_j\}$ respectively, the tensor product is the space
with an orthnormal basis $\{e_k\otimes f_j\}$. In particular, if $\HC_0$ and $\HC_1$ have finite dimensions $n$ and $k$,
the space $\HC_0 \otimes \HC_1$ has the dimension $nk$.
The operators $A$ in $\HC_0 \otimes \HC_1$ can be  given by matrices $A^{i_1i_2}_{j_1j_2}$, so that
\[
A (e_{i_1}\otimes f_{i_2})=\sum_{j_1,j_2} A_{i_1i_2}^{j_1j_2} e_{j_1}\otimes f_{j_2}.
\]
Or equivalently, if $X\in \HC_0 \otimes \HC_1$ has coordinates $X^{kj}$ in the basis $\{e_k\otimes f_j\}$,
the vector $AX$ has the coordinates $\sum_{m,l} A^{kj}_{ml} X^{ml}$ in this basis.

A product $A\otimes B$ of two operators $A$ and $B$ acting in $\HC_0$ and $\HC_1$ respectively
is defined by its action on tensor products as
\[
(A\otimes B)(e\otimes f)=Ae\otimes Bf.
\]
In the coordinate description $A\otimes B$ has the matrix elements expressed as $A^{i_1}_{j_1}B^{i_2}_{j_2}$
in terms of the matrix elements of $A$ and $B$.

An operator $A$ in $\HC_0$ has the natural lifting $A\otimes I$ (where $I$ is the unit operator) to $\HC_0\otimes \HC_1$.
Similarly an operator $B$ in $\HC_1$ has the natural lifting $I\otimes B$ to $\HC_0\otimes \HC_1$.

The key notion of the theory of interacting systems is that of the {\it partial trace}. For an operator
$A$ in $\HC_0 \otimes \HC_1$ the partial trace with respect to the second system is the operator ${\tr}_{p1} A$
 in $\HC_0$ given by the matrix
 \begin{equation}
 \label{eqdefparttr}
 ({\tr}_{p1} A)^i_j=\sum_k A^{ik}_{jk}.
 \end{equation}
This  partial trace is interpreted as the state of the first system given the state of the coupled one.
Therefore it can be looked at as the quantum analog of the notion of marginal distribution of classical
probability. Similarly, the partial trace with respect to the first system is the operator ${\tr}_{p0} A$
 in $\HC_1$ given by the matrix
 \[
 ({\tr}_{p0} A)^i_j=\sum_k A^{ki}_{kj}.
 \]
 Clearly,
 \[
 {\tr} ({\tr}_{p0} A)= {\tr} ({\tr}_{p1} A)= {\tr} (A).
 \]

In a two-dimensional Hilbert spaces
$\C^2$ one usually chooses the standard basis  $e_0=(1,0)$, $e_1=(0,1)$,
and represents the Hilbert product space $\HC_0\otimes \C^2$ by the natural
decomposition
\[
\HC_0 \otimes \C^2=\HC_{00}\oplus H_{01}=\HC_0 \otimes e_0 \oplus \HC_0 \otimes e_1.
\]
Every operator $A$ in this space has the block decomposition
\[
A=
\begin{pmatrix}
 A_{0\to 0} & A_{1\to 0}  \\

 A_{0\to 1} & A_{1\to 1}
\end{pmatrix}
=
\begin{pmatrix}
 (A_{j0}^{i0}) &  (A_{j1}^{i0}) \\

 (A_{j0}^{i1}) & (A_{j1}^{i1})
\end{pmatrix}
\]
where the operators $A_{i\to j}$ act from $\HC_{0i}$ to $\HC_{0j}$, $i,j=0,1$. The trace
\eqref{eqdefparttr} gets the expression
 \begin{equation}
 \label{eqdefparttrwithtwodim}
 ({\tr}_{p1} A)^i_j=A^{i0}_{j0}+A^{i1}_{j1}.
 \end{equation}

In particular, we shall use the following block representations:
\begin{equation}
\label{eqblockmatrix}
A\otimes I=
\begin{pmatrix} A & 0 \\ 0 & A \end{pmatrix},
\quad A\otimes \Om=
\begin{pmatrix} A & 0 \\ 0 & 0 \end{pmatrix},
\quad
  C\otimes
  \begin{pmatrix} 0 & 0 \\ 1 & 0 \end{pmatrix}
  =\begin{pmatrix} 0 & 0 \\ C & 0 \end{pmatrix},
  \quad
C\otimes
\begin{pmatrix} 0 & 1 \\ 0 & 0 \end{pmatrix}
=\begin{pmatrix} 0 & C \\ 0 & 0 \end{pmatrix}
\end{equation}

More generally,
if $B=(B^i_j)$ is a matrix in $\C^2$, then the matrix of $I\times B$ in $\HC\otimes \C^2$
has the block decomposition
\begin{equation}
\label{eqblockmatrep}
\begin{pmatrix} B^0_0 I & B_1^0 I \\
B^1_0 I & B_1^1 I
\end{pmatrix}.
\end{equation}

To conclude this section let us write down the simple small time asymptotic
formula for the evolutions $e^{-itH}$
that we shall use repeatedly.  Namely, up to the terms
of order higher than $t^2$ in small $t$, we have
\[
e^{-itH}\rho e^{itH}
=(1-it H -\frac12 t^2 H^2)\rho (1+it H-\frac12 t^2 H^2)
=\rho-it [H,\rho]-\frac12 t^2 H^2\rho -\frac12 t^2 \rho H^2+t^2 H\rho H
\]
\begin{equation}
\label{smalltimegroup}
=\rho-it [H,\rho]+t^2 (H\rho H-\frac12 \{H^2,\rho\}).
\end{equation}

\section{The starting point: Markov chains of sequential indirect observations}
\label{secMarchainindirodserv}

Here we describe the
Markov chains of sequential indirect observations (rather standard by now, at least after paper \cite{Attal})
in discrete and continuous time recalling first quickly the main notions related to quantum measurements.

{\it Physical observables} are given by self-adjoint operators $A$ in $\HC$. If $A$ has
a discrete spectrum (which is always the case in finite-dimensional $\HC$, that we shall mostly work with),
then $A$ has the spectral decomposition $A=\sum_j \la_j P_j$, where
 $P_j$ are orthogonal projections on the eigenspaces of $A$ corresponding to
 the eigenvalues $\la_j$. According to the {\it basic postulate of quantum measurement}
 \index{basic postulate of quantum measurement}, measuring observable $A$ in a state $\ga$
(often referred to as the {\it Stern-Gerlach experiment}\index{ Stern-Gerlach experiment})
can yield each of the eigenvalue $\la_j$ with the probability
   \begin{equation}
\label{eqantiunitdress1}
{\tr} \, (\ga P_j)={\tr} \, (P_j \ga P_j),
\end{equation}
 and, if the value $\la_j$ was obtained, the state
of the system changes (instantaneously) to the reduced state
\[
P_j\ga P_j/ {\tr} \, (\ga P_j).
\]
In particular, if the state $\rho$ was pure, $\ga=|\psi\rangle \langle \psi|$, then the
probability to get $\la_j$ as the result of the measurement becomes $(\psi_,P_j\psi)$
and the reduced state also remains pure and is given by the vector $P_j\psi$.
If the interaction with the apparatus was preformed 'without reading the results',
the state $\rho$ is said to be subject to a {\it non-selective measurement}
\index{non-selective measurement} that changes $\ga$ to the state $\sum_j P_j\rho P_j$.

Indirect measurements of a chosen quantum system in the initial space $\HC_0$,
which we shall often referred to as an atom, are organised in the following way.
One couples the atom with another quantum system, a measuring devise, specified
by another Hilbert space $\HC$. Namely the combined system lives in the tensor product
 Hilbert space $\HC_0\times \HC$ and its evolution is given by  certain self-adjoint
 operator $H$ in  $\HC_0\times \HC$. In the measuring device some fixed vector $\varphi \in \HC$
 is chosen, called the vacuum and interpreted as the stationary state of the devise when
 no interaction is involved. The corresponding density matrix will be denoted $\Om=|\varphi \rangle \langle \varphi|$.
 Indirect measurements of the states of the atom
 are performed by measuring the coupled system via an observable of the second system
 and then projecting the resulting state to the atom via the partial trace.

Namely it is described by an operator $R$ in $\HC$ with the spectral decomposition
$R=\sum_j \la_j P_j$ and is performed in two steps: given a state $\ga$ in  $\HC_0\times \HC$
one performs a measurement of $R$ lifted as $I\otimes R$ to $\HC_0\times \HC$
yielding values $\la_j$ and new states
\[
(I\otimes P_j)\ga (I\otimes P_j)/ {\tr} \, (\ga  (I\otimes P_j))
\]
with probabilities $p_j= {\tr} \, (\ga  (I\otimes P_j))$, and then one projects these states to $\HC_0$
via the partial trace producing the states
\begin{equation}
\label{eqindirmeas}
{\tr}_{p1} [(I\otimes P_j)\ga (I\otimes P_j)/ {\tr} \, (\ga  (I\otimes P_j))].
\end{equation}

The discrete time {\it Markov chain of successive indirect observations} (or measurements) evolves according
to the following procedure specified by a triple: a self-adjoint  operator $H$ in $\HC_0\times \HC$,
a self-adjoint operator $R$ in $\HC$ and the vacuum vector $\Om$ in $\HC$.
(i) Starting with an initial state $\rho$ of $\HC_0$ one couples it with the device in its vacuum state $\Om$
producing the state $\ga=\rho\otimes \Om$ in $\HC_0\times \HC$,
(ii) During a fixed period of time $t$ one evolves the system according to the operator $H$ producing the state
$\ga_t=e^{-itH}\ga e^{itH}$ in $\HC_0\times \HC$,
(iii) One performs the indirect measurement with the state $\ga_t$ yielding the states
\begin{equation}
\label{eqMarkchain}
\rho_t^j={\tr}_{p1} \frac{(I\otimes P_j)\ga_t (I\otimes P_j)}{p_j(t)}
={\tr}_{p1} \frac{(I\otimes P_j)e^{-itH}(\rho\otimes \Om) e^{itH} (I\otimes P_j)}{p_j(t)}
\end{equation}
with the probabilities
\begin{equation}
\label{eqMarkchain1}
p_j(t)={\tr} \, (\ga_t  (I\otimes P_j))
={\tr} \, (e^{-itH}(\rho\otimes \Om) e^{itH}  (I\otimes P_j)).
\end{equation}

Then the same repeats starting with $\rho_t$ as the initial state.  Let us denote $U_t$ the transition
operator of this Markov chain that acts on the set of continuous functions on $S(H)$ as
\begin{equation}
\label{eqMarkchain2}
U_t f(\rho)=\E f(\rho_t)=\sum_j p_j(t) f(\rho_t^j).
\end{equation}
Similarly one can define the continuous time {\it Markov chain of successive indirect observations} (or measurements)
$O^{\rho}_{t,\la}$ and the corresponding Markov semigroup $T_t^{\la}$ on $C(H(S))$
evolving according to the same rules, with only difference that the times $t$ between successive measurements
are not fixed, but represent exponential random  variables $\tau$ with some fixed intensity $\la$: $\P(\tau>t)=e^{-\la t}$.
The generator $L^{\la}$ of this Markov process is bounded in $C(S(H))$ and acts as
\begin{equation}
\label{eqMarkchain3}
L^{\la}f(\rho)=\frac{(U_{\la}f-f)(\rho)}{\la}=\frac{1}{\la}\sum_j p_j(t) (f(\rho_t^j)-f(\rho)).
\end{equation}

All "quantum content" of the theory is now captured in the explicit formula \eqref{eqMarkchain}.
What follows will be
the pure classical probability analysis of these Markov chains, their scaling limits and control.

In this paper we shall work with the measuring devises of the simplest form living in two-dimensional Hilbert spaces
$\C^2$ or more generally the tensor products of these spaces. Choosing the standard basis $e_0=(1,0)$, $e_1=(0,1)$,
we shall use the decomposition
\[
\HC_0\otimes \C^2=\HC_{00}\oplus H_{01}=\HC_0 \otimes e_0 \oplus \HC_0 \otimes e_1,
\]
and we shall choose the vacuum vector $\varphi=e_0$, so that
\[
\Om=\begin{pmatrix}
 1 & 0  \\
 0 & 0
\end{pmatrix}
\]

\section{Belavkin equations for a counting observation}
\label{seccountcase}

For simplicity we shall work exclusively with finite-dimensional Hilbert spaces $\HC_0=\C^n$,
making occasionally some comments about more general case.
The set of states $S(\C^n)$ is a compact convex set in the Euclidean space $\R^{n^2}$,
the space of complex Hermitian $n\times n$ matrices.

Let us choose an arbitrary self-adjoint operator in $\HC_0\otimes \C^2$ given by its matrix representation
\[
H= \begin{pmatrix} A & 0 \\ 0 & B \end{pmatrix}
+  \begin{pmatrix} 0 & -iC^* \\ iC & 0 \end{pmatrix}.
\]

We are aiming at calculating the small time asymptotics of the Markov transition operators defined by
\eqref{eqMarkchain}.

The main idea for obtaining sensible asymptotic limits suggests enhancing the interaction part $C$ of $H$ by
 replacing it with the scaled version $C/\sqrt t$. Thus we choose the Hamiltonian in the form

 \[
H= \begin{pmatrix} A & 0 \\ 0 & B \end{pmatrix}
+ \frac{1}{\sqrt t} \begin{pmatrix} 0 & -iC^* \\ iC & 0 \end{pmatrix}.
\]

 \begin{remark}
 The idea of the scaling comes from the analysis of the so-called quantum Zeno paradox.
 Its essence is a rather simple observation that if one performs repeated measurements with reduction
 \eqref{eqantiunitdress1} and pass to the limit,
  as time between measurements tends to zero, then the state effectively remains in the initial
  state all the time irrespectively of the dynamics. This effect is also referred to as the watch dog effect.
  Therefore the only way to get a sensible dynamics that takes into account both dynamics and observation is
  to enhance the interaction part of the dynamics to make its effect comparable with that of the repeated reduction
  \eqref{eqantiunitdress1}. Thus one can suggest scaling $C$ as $C/t^{\al}$ with some $\al>0$. As calculations show
  (one can repeat the calculations below with an arbitrary $\al$)
  only with $\al=1/2$ a sensible limit is obtained.
 \end{remark}

By the second equation in \eqref{eqblockmatrix}, we get

\[
\rho \otimes \Om =\begin{pmatrix} \rho & 0 \\ 0 & 0 \end{pmatrix},
\quad
\left[H, \begin{pmatrix} \rho & 0 \\ 0 & 0 \end{pmatrix}\right]
=\begin{pmatrix} [A,\rho] & + i\rho C^*/\sqrt t \\ iC\rho/\sqrt t & 0 \end{pmatrix}
\]
\[
H\begin{pmatrix} \rho & 0 \\ 0 & 0 \end{pmatrix}H
=\begin{pmatrix} A\rho A & -i A\rho C^*/\sqrt t\\ iC\rho A/\sqrt t & C\rho C^*/t \end{pmatrix},
\quad
H^2 =\begin{pmatrix} A^2+C^*C/t & -i(AC^* +C^* B)/\sqrt t \\ i(CA+BC)/\sqrt t & B^2+CC^*/t \end{pmatrix},
\]
\[
\{H^2, \rho \otimes \Om\}=\begin{pmatrix} \{A^2+C^*C/t,\rho\} & -i\rho (AC^*+C^*B)/\sqrt t \\ i(CA+BC)\rho/\sqrt t  & 0 \end{pmatrix}.
\]
where $\{C,D\}=CD+DC$ denotes the anti-commutator.
Using \eqref{smalltimegroup}, and keeping terms of order not exceeding $t$ we get the approximation
\begin{equation}
\label{eqdressedrho1}
e^{-itH} (\rho\otimes \Om) e^{itH}
=\begin{pmatrix} \rho -it[A, \rho] -\frac12 t\{C^*C,\rho\} & \sqrt t \rho C^* \\ \sqrt t C\rho & tC\rho C^* \end{pmatrix},
\end{equation}
which is the key formula for what follows.

As it turns out, the limiting processes are of two types, depending on whether
the projectors $P_0$ and $P_1$ of the spectral decomposition of $R$ are diagonal, that is
\begin{equation}
\label{eqdiagonproj}
P_0= \begin{pmatrix} 1 & 0 \\ 0 & 0 \end{pmatrix}, \quad P_1=\begin{pmatrix} 0 & 0 \\ 0 & 1 \end{pmatrix}
\end{equation}
or otherwise. Let us start with the case of projectors \eqref{eqdiagonproj}.

We have
\[
I\otimes P_0= \begin{pmatrix} I & 0 \\ 0 & 0 \end{pmatrix},
\quad I\otimes P_1 =\begin{pmatrix} 0 & 0 \\ 0 & I \end{pmatrix},
\]
and
\[
(I\otimes P_0) e^{-itH} \begin{pmatrix} \rho & 0 \\ 0 & 0 \end{pmatrix} e^{itH} (I\otimes P_0)
= \rho -it[A, \rho] -\frac12 t\{C^*C,\rho\},
\]
\[
(I\otimes P_1) e^{-itH} \begin{pmatrix} \rho & 0 \\ 0 & 0 \end{pmatrix} e^{itH} (I\otimes P_1)
= tC\rho C^*.
\]
Hence the non-normalized new states are
\[
\tilde \rho_1=\rho -it[A, \rho] -\frac12 t\{C^*C,\rho\},
\quad
\tilde \rho_2= tC\rho C^*,
\]
occurring with the probabilities
\[
p_1=1-t \, {\tr} (C^*C \rho), \quad p_2=t \, {\tr} (C^*C \rho).
\]

Aiming at using Proposition \ref{propconvsemigr} (ii) we are looking for
the limit of the operator $(U_h-1)/h$ for $h \to 0$.

Denoting $T=  {\tr} (C^*C \rho)$ we can write up to terms of order $t$ that
\[
\frac{U_h-1}{h} f(\rho)=\frac{1}{h} (1-hT)\left[f(\frac{\tilde \rho_1}{1-hT})-f(\rho)\right]
+\frac{1}{h} h\, T \left[(f(\frac{\tilde \rho_2}{hT})-f(\rho))\right]
\]
\[
\approx  \frac{1}{h} (1-hT)[f(\rho -it[A, \rho] -\frac12 t\{C^*C,\rho\}+h\rho T)-f(\rho)]
+T \left[f(\frac{C\rho C^*}{T})-f(\rho)\right]\approx L_{count}f,
\]
with
\begin{equation}
\label{eqjumpgener}
L_{count}f(\rho)=-(f'(\rho), i[A, \rho] +\frac12 \{C^*C,\rho\}-\rho T)
+T \left[f(\frac{C\rho C^*}{T})-f(\rho)\right].
\end{equation}

Summarising by looking carefully at the small terms ignored, we can conclude the following.

\begin{lemma}
\label{lemmaongencount}
Under the setting considered,
\begin{equation}
\label{eqjumpgener1}
\|\frac{U_h-1}{h} f -Lf\|\le \sqrt h \ka \|f\|_{C^2(S(\HC_0))}
\end{equation}
for $f\in C^2(S(\HC_0))$, with $L_{count}$ given by \eqref{eqjumpgener} and a constant $\ka$.
\end{lemma}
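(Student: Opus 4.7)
The plan is to upgrade the formal calculation already carried out in the text into a quantitative bound by keeping explicit remainder terms at every step. The $\sqrt h$ rate traces back to the rescaling of the interaction: after replacing $C$ by $C/\sqrt h$, the scaled Hamiltonian satisfies $\|hH\|=O(\sqrt h)$, so the cubic Taylor remainder in the matrix exponential $e^{-ihH}$ has operator norm $O(h^{3/2})$ rather than the $O(h^3)$ one would get for an unscaled bounded generator. This is the only contribution that survives division by $h$.

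First I would make \eqref{eqdressedrho1} quantitative: using Taylor's formula with integral remainder for the matrix exponential, write
\[
e^{-ihH}(\rho\otimes\Om)e^{ihH}
= \begin{pmatrix} \rho -ih[A,\rho] - \tfrac12 h\{C^*C,\rho\} & \sqrt h\,\rho C^* \\ \sqrt h\, C\rho & h\,C\rho C^* \end{pmatrix} + h^{3/2} E_h(\rho),
\]
where $\|E_h(\rho)\|$ is bounded uniformly in $\rho\in S(\HC_0)$ and $h\in(0,1]$ by a constant depending only on $\|A\|$, $\|B\|$, $\|C\|$. Projecting with $I\otimes P_0$ and $I\otimes P_1$ and tracing out the device yields
\[
\tilde\rho_1 = \rho - ih[A,\rho] - \tfrac12 h\{C^*C,\rho\} + O(h^{3/2}),\qquad
\tilde\rho_2 = h\,C\rho C^* + O(h^{3/2}),
\]
\[
p_1 = 1 - hT + O(h^{3/2}),\qquad p_2 = hT + O(h^{3/2}),\qquad T:=\tr(C^*C\rho),
\]
all uniformly in $\rho$.

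Next I would apply Taylor's theorem to $f\in C^2(S(\HC_0))$ on each of the two branches. For the no-click branch, $\rho_1-\rho = h\,D(\rho) + O(h^{3/2})$ with $D(\rho) = -i[A,\rho]-\tfrac12\{C^*C,\rho\}+T\rho$, so
\[
f(\rho_1) - f(\rho) = h\,(f'(\rho), D(\rho)) + O(h^{3/2})\|f\|_{C^2(S(\HC_0))};
\]
multiplying by $p_1/h = 1 + O(h)$ recovers the drift part of \eqref{eqjumpgener} with error $O(\sqrt h)\|f\|_{C^2}$. For the click branch the identity
\[
T\tilde\rho_2 - p_2 \cdot hC\rho C^* = O(h^{5/2}),
\]
obtained from the asymptotics of $\tilde\rho_2$ and $p_2$, gives $p_2\cdot(\rho_2 - C\rho C^*/T) = O(h^{3/2})$ whenever $T>0$, hence $p_2[f(\rho_2)-f(C\rho C^*/T)] = O(h^{3/2})\|f\|_{C^1}$ and $(p_2/h)[f(C\rho C^*/T)-f(\rho)] = T[f(C\rho C^*/T)-f(\rho)] + O(\sqrt h)\|f\|_{C^0}$.

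The main technical subtlety I anticipate is establishing uniformity in $\rho$ across the singular locus where $T$ is close to zero, since both $\rho_2$ and $C\rho C^*/T$ degenerate there (note that $T=0$ forces $C\rho C^*=0$, so the convention $T\,f(C\rho C^*/T):=0$ is the natural continuous extension already built into $L_{count}$). I would handle this by a two-regime split: when $T\ge\sqrt h$ the algebra above applies verbatim; when $T<\sqrt h$ then $p_2/h = O(\sqrt h)$, so both $(p_2/h)[f(\rho_2)-f(\rho)]$ and $T[f(C\rho C^*/T)-f(\rho)]$ are individually $O(\sqrt h)\|f\|_{C^0}$ and cancel to within the required bound. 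Combining the two branches and the two regimes yields \eqref{eqjumpgener1}.
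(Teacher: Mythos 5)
Your proposal is correct and follows essentially the same route as the paper: the paper's ``proof'' of Lemma \ref{lemmaongencount} is exactly the preceding formal expansion of $\tilde\rho_1,\tilde\rho_2,p_1,p_2$ plus the remark that one should ``look carefully at the small terms ignored,'' and you are simply making those remainders explicit, correctly locating the $\sqrt h$ rate in the cubic term of the scaled exponential. One small caveat on the click branch: the displayed identity should read $T\tilde\rho_2-p_2\,C\rho C^*=O(h^{3/2})$ (not $T\tilde\rho_2-p_2\cdot hC\rho C^*=O(h^{5/2})$), and the deduction that $p_2(\rho_2-C\rho C^*/T)=O(h^{3/2})$ \emph{uniformly} in $T$ requires dividing that quantity by $T$, so you need either the positivity inequality $\|C\rho C^*\|\le \tr(C\rho C^*)=T$ (which makes the numerator itself $O(T h^{3/2})$) or the sharper remainder $\tilde\rho_2=hC\rho C^*+O(h^2)$, which holds because the lower-right block is exactly $V_{10}\rho V_{10}^*$ with $V_{10}=\sqrt h\,C+O(h^{3/2})$; with only the crude $O(h^{3/2})$ remainder, the two-regime split at $T=\sqrt h$ degrades the overall rate below the claimed $\sqrt h$.
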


We can prove now our first result.

\begin{theorem}
\label{thBeleqcount}
Let $\HC_0=\C^n$ and $A$, $C$ be $n\times n$ square matrices with $A$ being Hermitian. Then:

(i) The operator  \eqref{eqjumpgener} generates a Feller process $O_t^{\rho}$ in $S(\HC_0)$ and
the corresponding Feller semigroup $T_t$ in $C(S(\HC_0))$ having the spaces
 $C^1(S(\HC_0))$ and $C^2(S(\HC_0))$ as invariant cores, and $T_s$ are bounded in these spaces
 uniformly for $s\in [0,t]$ with any $t>0$.

(ii) The scaled discrete semigroups $(U_h)^{[s/h]}$ converge to the semigroup $T_s$, as $h\to 0$, so that the corresponding
processes converge in distribution, with the following rates of convergence:
\begin{equation}
\label{eq1thBeleqcount}
\|(U_h)^{[s/h]} -T_sf\| \le \sqrt h s \ka(t) \|f\|_{C^2(S(\HC_0))},
\end{equation}
where the constant $\ka(t)$ depends on the dimension $n$ and the norms of $A$ and $C$.

(iii) The scaled semigroups $T_s^{\la}$ converge to the semigroup $T_s$, as $\la\to 0$, so that the corresponding
processes converge in distribution, with the following rates of convergence:
\begin{equation}
\label{eq2thBeleqcount}
\|T_s^{\la}f -T_sf\| \le \sqrt \la s \ka(t) \|f\|_{C^2(S(\HC_0))}.
\end{equation}
\end{theorem}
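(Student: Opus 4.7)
The plan is to leverage Lemma \ref{lemmaongencount}, which supplies the rate-$\sqrt h$ estimate $\|(U_h-1)/h - L_{count}\|_{C^2\to C}\le \sqrt h\,\ka$, and to feed it into the abstract semigroup-convergence machinery of Proposition \ref{propconvsemigr}. The real work sits in part (i); once the Feller semigroup $T_s$ is constructed with $C^1,C^2$ as invariant cores enjoying uniform bounds on $[0,t]$, both rate estimates in (ii) and (iii) reduce to routine telescoping and Duhamel arguments.

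\textbf{Part (i).} I would decompose $L_{count}=L_0+\mathcal{J}$, where $L_0 f(\rho)=-(f'(\rho),V(\rho))$ is the first-order differential operator along the polynomial vector field
\[
V(\rho)=-i[A,\rho]-\tfrac12\{C^*C,\rho\}+T(\rho)\rho,\qquad T(\rho)=\tr(C^*C\rho),
\]
and $\mathcal{J}f(\rho)=T(\rho)[f(C\rho C^*/T(\rho))-f(\rho)]$ is bounded on $C(S(\HC_0))$. Direct computation shows $\tr V(\rho)=0$ and $V(\rho)^*=V(\rho)$, so the smooth flow $\Phi_s$ of $\dot\rho=V(\rho)$ stays on the trace-one hyperplane of Hermitian matrices; since the jump map $\rho\mapsto C\rho C^*/T(\rho)$ sends states to states wherever defined, the standard piecewise-deterministic-Markov-process construction (flow $\Phi_s$ between jumps at hazard $T(\rho)$, reset to $C\rho C^*/T(\rho)$ at a jump) yields a Feller process on $S(\HC_0)$ whose associated semigroup $T_s$ has $L_{count}$ as generator. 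For $C^1,C^2$-invariance with uniform bound on $[0,t]$ I would expand $T_s$ as a Dyson series in the bounded perturbation $\mathcal{J}$ around the deterministic flow and differentiate in the initial condition, using smoothness of $\Phi_s$ and the smoothness (to be verified) of $\rho\mapsto T(\rho) f(C\rho C^*/T(\rho))$; the apparent $1/T$ singularity is cancelled up to second order by the prefactor $T(\rho)$ together with the subtracted $T(\rho)f(\rho)$ term.

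\textbf{Parts (ii) and (iii).} For (ii), use the telescoping identity
\[
(U_h)^n f - T_{nh} f = \sum_{k=0}^{n-1}(U_h)^{n-1-k}(U_h-T_h)T_{kh}f
\]
with two estimates through the common approximation $1+hL_{count}$: Lemma \ref{lemmaongencount} gives $\|U_h-1-hL_{count}\|_{C^2\to C}\le h\sqrt h\,\ka$, while $T_h g-g-hL_{count}g=\int_0^h(T_s-1)L_{count}g\,ds$ combined with the $C^2\to C^1$ and $C^1\to C$ boundedness of $L_{count}$ gives $\|T_h-1-hL_{count}\|_{C^2\to C}=O(h^2)$. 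Hence $\|U_h-T_h\|_{C^2\to C}=O(h^{3/2})$, and contractivity of $U_h$ on $C$ together with the uniform bound $\|T_{kh}f\|_{C^2}\le\ka(t)\|f\|_{C^2}$ from (i) turns the telescope into $n\cdot O(h^{3/2})=\sqrt h\, s\,\ka(t)\|f\|_{C^2}$ with $n=[s/h]$. For (iii), the generator $L^\la=(U_\la-1)/\la$ is bounded on $C(S(\HC_0))$ and Lemma \ref{lemmaongencount} yields $\|L^\la-L_{count}\|_{C^2\to C}\le\sqrt\la\,\ka$; the Duhamel identity
\[
T_s^\la f-T_s f=\int_0^s T_{s-r}^\la(L^\la-L_{count})T_r f\,dr,
\]
together with $\|T_{s-r}^\la\|_{C\to C}\le 1$ and $\|T_r f\|_{C^2}\le\ka(t)\|f\|_{C^2}$, produces the required bound $\sqrt\la\, s\,\ka(t)\|f\|_{C^2}$.

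\textbf{Main obstacle.} The substantive step is the $C^2$-invariance with uniform time-bound in part (i), and within it the smoothness of $\rho\mapsto T(\rho) f(C\rho C^*/T(\rho))$ on the full state space $S(\HC_0)$, including points where $T(\rho)=0$. This requires an expansion of $f$ around the image point, showing that two derivatives of the product survive passage through the zero set of $T$. A secondary but necessary check is that $V$ actually preserves positivity, i.e.\ that at $\pa S(\HC_0)$ the drift is inward or tangent to the appropriate face. Once these smoothness and invariance properties are in place, the $\sqrt h$ and $\sqrt\la$ rates in (ii) and (iii) follow directly from Lemma \ref{lemmaongencount} by the telescoping and Duhamel arguments above.
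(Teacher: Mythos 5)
Your route is essentially the paper's: part (i) is built as a piecewise deterministic Markov process (deterministic flow of the drift $V$ plus a bounded jump perturbation, handled by a Dyson/interaction-representation expansion, which is exactly the content of Proposition \ref{propdetjumpdom}), and parts (ii) and (iii) are the standard generator-convergence estimates that the paper packages once and for all as Proposition \ref{propconvsemigr}; your telescoping and Duhamel identities are precisely the proofs of \eqref{eq4propconvsemigr} and \eqref{eq1propconvsemigr}, and your $O(h^2)$ bound on $T_h-1-hL_{count}$ in $C^2\to C$ is the paper's observation that \eqref{eq5propconvsemigr} holds for the triple $C^2\subset C^1\subset C$.

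There is, however, one genuine gap: the invariance of $S(\HC_0)$ under the flow of $V$. You verify only that the flow preserves the affine set of Hermitian trace-one matrices and relegate positivity to ``a secondary but necessary check,'' but preservation of positivity is exactly the nontrivial content of part (i) --- without it the process does not live in $S(\HC_0)$ and $T_s$ does not act on $C(S(\HC_0))$. The paper closes this with the Brezis flow-invariance theorem (Theorem \ref{thBrezis}): by compactness it suffices to show that for $\rho$ on the boundary of $S(\HC_0)$ one has $(v,(\rho+hV(\rho))v)\ge 0$ for every unit vector $v$ and small $h>0$; this is clear for $v$ outside the kernel of $\rho$, while for $\rho v=0$ one computes directly (using $v^*\rho=0$ as well, since $\rho$ is Hermitian) that
\[
(v,V(\rho)v)=-i(v,A\rho v)+i(v,\rho Av)-\tfrac12 (v,C^*C\rho v)-\tfrac12(v,\rho C^*Cv)+T(\rho)(v,\rho v)=0,
\]
so the drift is tangent to the face and \eqref{eqBrezis} holds. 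You should supply this computation rather than defer it. Your other flagged concern --- the behaviour of the jump term $T(\rho)\bigl[f(C\rho C^*/T(\rho))-f(\rho)\bigr]$ near the zero set of $T$ --- is a fair one (note at least that the term is bounded by $2T(\rho)\|f\|$ and extends continuously by zero, and that the smoothness needed for the $C^2$-invariance must be checked there); the paper is silent on this point, so it is a place where your write-up could actually be more careful than the original.
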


\begin{proof}
(i) This is a consequence of Proposition  \ref{propdetjumpdom}. To make this conclusion one needs to show
property \eqref{eqBrezis} with $K=S(\C^n)$ and
\[
b(\rho)= -i[A, \rho] -\frac12 \{C^*C,\rho\}+ {\tr} (C^*C\rho) \rho.
\]
It is straightforward to see that the solutions to the ODE $\dot \rho=b(\rho)$ preserve the affine set of
Hermitian matrices with unit trace. So the key point is the preservation of positivity. It turns out
that a stronger version of \eqref{eqBrezis} holds, namely that
$d(\rho +h b(\rho),K)=0$
for any $\rho$ from the boundary of $K$ and all sufficiently small $h$. By the compactness of a unit ball in $\C^n$,
this claim follows from the following one. If $\rho$ belongs to the boundary of $K$,
that is, there exists a nonempty set $V(\rho)$ of unit vectors such that $\rho v=0$ for $v\in V(\rho)$,
then $(v,(\rho +h b(\rho))v)\ge 0$ for any unit vector $v$ and all sufficiently small $h$. But this property
is obvious for $v\notin V(\rho)$. On the other hand $(v,b(\rho)v)=0$ for $v\in V(\rho)$ implying that
$(v,(\rho +h b(\rho))v)= 0$ for all $h>0$ and all $v\in V(\rho)$.

(ii) This is a consequence of (i), Proposition \ref{propconvsemigr} (ii) and the observation that \eqref{eq5propconvsemigr}
holds here with  the triple of spaces $C^2(S(\HC_0))\subset C^1(S(\HC_0)) \subset C(S(\HC_0))$.

(iii)  This is a consequence of (i), formula \eqref{eqMarkchain3} and Proposition \ref{propconvsemigr} (i),
with $B=C(S(\HC_0))$, $D=C^2(S(\HC_0))$.
\end{proof}

\begin{remark}
\label{remoninfindim}
 This result extends almost automatically to the case of an arbitrary separable Hilbert space $\HC_0$
and arbitrary bounded operators $H,C$, with the derivatives understood in the Fr\'echet sense. The only point
where the finite-dimensional setting was used was in proving statement (i) using compactness of a unit ball in
 $\C^n$ and the Brezis theorem. In infinite-dimensional case one can use the compactness of a unit ball in
 a Hilbert space in the weak topology and the Banach-space version of the Brezis theorem, as presented in
 \cite{Martin} and \cite{Lakshm}.
 \end{remark}

As is seen directly via Ito's formula, the Feller process $O_t^{\rho}$ generated by \eqref{eqjumpgener} can be
described as solving the jump type SDE
\begin{equation}
\label{eqBeleqcount}
d\rho=(- i[A, \rho] -\frac12 \{C^*C,\rho\}+ {\tr} (C\rho C^*) \rho ) dt
+\left(\frac{C\rho C^*}{{\tr} (C\rho C^*)}-\rho\right) dN_t,
\end{equation}
with the counting process $N_t$ with the position dependent intensity ${\tr} (C^*C\rho)$,
so that the compensated process $N_t-\int_0^t {\tr} (C^*C\rho_s) \, ds$ is a martingale.
Equation \eqref{eqBeleqcount} is the {\it Belavkin quantum filtering SDE} corresponding
to the {\it counting type observation} (because the driving process $N_t$ is a counting process).
Representation via the generator is an equivalent way of specifying the process of
continuous quantum observation and filtering.

\begin{remark} Equation \eqref{eqBeleqcount} is slightly nonstandard as the driving noise $N_t$ is itself
position dependent. However there is a natural way to rewrite it in terms of an independent driving noise.
Namely, with a standard Poisson random measure process $N(dx \,dt)$ on $\R_+\times \R_+$ (with Lebesgue measure
as intensity) one can rewrite equation \eqref{eqBeleqcount} in the following equivalent form:
\begin{equation}
\label{eqBeleqcountindepnoise}
d\rho=(- i[A, \rho] -\frac12 \{C^*C,\rho\}+ {\tr} (C\rho C^*) \rho ) dt
+\left(\frac{C\rho C^*}{{\tr} (C\rho C^*)}-\rho\right) \1({\tr} (C^*C\rho)\le x) N(dx\, dt),
\end{equation}
see details of this construction in \cite{Pellegrini10a}. Alternatively, one can make sense
of \eqref{eqBeleqcount} in terms of the general theory of weak SDEs from \cite{Kol11}.
\end{remark}

\begin{remark} The meaning of the term 'counting observation' (as well as 'diffusive type' of the next section)
becomes more concrete in a more advanced
treatment of the process of quantum measurement, see e.g.  \cite{BoutHanJamQuantFilt}.
\end{remark}

\section{Belavkin equations for a diffusive observation}
\label{secdifcase}

Let us turn to the second case of choosing orthogonal projectors $P_0,P_1$,
when they differ from the diagonal choice \eqref{eqdiagonproj}.

General couple of two orthogonal projectors in $\C^2$ is easily seen to be of the form
\[
P_0=\begin{pmatrix} \cos^2 \phi & \sin\phi \cos \phi e^{i\psi}\\  \sin\phi \cos \phi e^{i\psi} & \sin^2\phi \end{pmatrix},
\quad
P_1=\begin{pmatrix} \sin^2 \phi & -\sin\phi \cos \phi e^{i\psi}\\  -\sin\phi \cos \phi e^{i\psi} & \cos^2\phi \end{pmatrix}.
\]

The phase terms with $\psi$ does not make much difference, so we choose further $\psi=0$. Moreover, to avoid
diagonal case we assume $\phi \neq \pi k/2$, $k\in N$.

By \eqref{eqblockmatrep},
\[
I\times P_0
= \begin{pmatrix} \cos^2 \phi I & \sin\phi \cos \phi I \\  \sin\phi \cos \phi I & \sin^2\phi I\end{pmatrix},
\quad
I\times P_1
= \begin{pmatrix} \sin^2 \phi I & -\sin\phi \cos \phi I \\  -\sin\phi \cos \phi I & \cos^2\phi I\end{pmatrix}.
\]

Hence, for arbitrary matrices $a,b,c,d$, we have
\[
(I\times P_0) \begin{pmatrix} a & b \\  c & d \end{pmatrix}
= \begin{pmatrix} \cos^2 \phi \, a + \sin\phi \cos \phi \, c & \cos^2 \phi \, b + \sin\phi \cos \phi \, d \\
 \sin\phi \cos \phi \, a + \sin ^2\phi \, c & \sin\phi \cos \phi \, b + \sin^2\phi \, d\end{pmatrix}
 \]
 and
 \[
 (I\times P_0) \begin{pmatrix} a & b \\  c & d \end{pmatrix} (I\times P_0)
 =\begin{pmatrix} \cos^2 \phi \, \om_{\phi} &  \sin\phi \cos \phi  \, \om_{\phi}   \\
   \sin\phi \cos \phi  \, \om_{\phi}  &  \sin^2\phi \, \om_{\phi}  \end{pmatrix}
   \]
   with
   \[
   \om_{\phi}=\om_{\phi}(a,b,c,d)=
   \cos^2 \phi \, a+  \sin\phi \cos \phi (b+c)+ \sin^2\phi \, d.
   \]
   Since $P_1$ is obtained from $P_0$ by changing $\phi$ to $\phi+\pi/2$, it follows that
\[
 (I\times P_1) \begin{pmatrix} a & b \\  c & d \end{pmatrix} (I\times P_1)
 =\begin{pmatrix} \sin^2 \phi \, \tilde \om_{\phi} &  -\sin\phi \cos \phi  \, \tilde \om_{\phi}   \\
   -\sin\phi \cos \phi  \, \tilde \om_{\phi}  &  \cos^2\phi \, \tilde \om_{\phi}  \end{pmatrix}
   \]
   with
   \[
   \tilde \om_{\phi}=\om_{\phi+\pi/2}
  = \sin^2 \phi \, a -\sin\phi \cos \phi (b+c)+ \cos^2\phi \, d.
   \]

By \eqref{eqdefparttrwithtwodim} we get
\[
{\tr}_{p1} [(I\times P_0) \begin{pmatrix} a & b \\  c & d \end{pmatrix} (I\times P_0)]
\]
\[
=\om_{\phi} = \cos^2 \phi \, a+  \sin\phi \cos \phi (b+c)+ \sin^2\phi \, d,
\]
\[
{\tr}_{p1} [(I\times P_1) \begin{pmatrix} a & b \\  c & d \end{pmatrix} (I\times P_1)]
\]
\[
=\tilde \om_{\phi} = \sin^2 \phi \, a-  \sin\phi \cos \phi (b+c)+ \cos^2\phi \, d.
\]

To get new states we have to take $a,b,c,d$ from \eqref{eqdressedrho1}.
Hence for the non-normalized states we get the approximate formulas (up to terms of order $t$):
\[
\tilde \rho_1 = \cos^2 \phi (\rho -it[A, \rho] -\frac12 t\{C^*C,\rho\})
+  \sqrt t \sin\phi \cos \phi (\rho C^* + C\rho)+ t \sin^2\phi \,C\rho C^*,
\]
\[
\tilde \rho_2 = \sin^2 \phi (\rho -it[A, \rho] -\frac12 t\{C^*C,\rho\})
- \sqrt t \sin\phi \cos \phi (\rho C^* + C\rho)+ t\cos^2\phi \, C\rho C^*.
\]
These states occur
with the probabilities
\[
p_1=\cos^2\phi(1-tT)+\sqrt t \sin\phi \cos \phi \, {\tr} (\rho C^* + C\rho)+t T \sin^2\phi,
\]
\[
p_2=\sin^2\phi(1-tT)-\sqrt t \sin\phi \cos \phi \, {\tr} (\rho C^* + C\rho)+t T \cos^2\phi.
\]

For arbitrary numbers $a,b,c$, one can write up to terms of order $t$, that
\[
\frac{1}{a+b\sqrt t +ct}
=\frac{1}{a} \frac{1}{1+(b/a) \sqrt t+(c/a) t}
=\frac{1}{a}(1-(b/a) \sqrt t-(c/a) t+(b/a)^2 t).
\]
Consequently, with this order of approximation,
\[
\frac{1}{p_1}=\frac{1}{\cos^2 \phi}(1- \tan \phi \sqrt t \, {\tr} (\rho C^* + C\rho)-T(\tan^2\phi-1) t
+\tan^2 \phi \, [{\tr} (\rho C^* + C\rho)]^2 t),
\]
\[
\frac{1}{p_2}=\frac{1}{\sin^2 \phi}(1+ \cot \phi \sqrt t \, {\tr} (\rho C^* + C\rho) -T(\cot^2\phi-1) t
+\cot^2 \phi \, [{\tr} (\rho C^* + C\rho)]^2 t),
\]
and therefore the normalized states are given by the formulas
\[
\rho_1=\frac{\tilde \rho_1}{p_1}
= [\rho -it[A, \rho] -\frac12 t\{C^*C,\rho\}
+  \sqrt t \tan\phi (\rho C^* + C\rho)+ t \tan^2\phi \,C\rho C^*]
\]
\[
\times(1- \tan \phi \sqrt t \, {\tr} (\rho C^* + C\rho)-T(\tan^2\phi-1) t
+\tan^2 \phi \, [{\tr} (\rho C^* + C\rho)]^2 t)
\]
\[
=\rho +  \sqrt t \tan\phi (\rho C^* + C\rho-\Om \rho)+tB_1
\]
with
\[
\Om= {\tr} (\rho C^* + C\rho)
\]
and
\[
B_1=
 -i[A, \rho] -\frac12 \{C^*C,\rho\}+ T\rho
 +\tan^2\phi (C\rho C^*- (\rho C^* + C\rho) \Om -T\rho+ \Om^2  \rho),
\]

and
\[
\rho_2=\frac{\tilde \rho_2}{p_2}
= [\rho -it[A, \rho] -\frac12 t\{C^*C,\rho\}
-  \sqrt t \cot\phi (\rho C^* + C\rho)+ t \cot^2\phi \,C\rho C^*]
\]
\[
\times(1+ \cot \phi \sqrt t \, {\tr} (\rho C^* + C\rho)-T(\cot^2\phi-1) t
+\cot^2 \phi \, [{\tr} (\rho C^* + C\rho)]^2 t)
\]
\[
=\rho - \sqrt t \cot\phi (\rho C^* + C\rho-\Om\rho)+tB_2
\]
with
\[
B_2=
-i[A, \rho] -\frac12 \{C^*C,\rho\}+T\rho
+  \cot^2\phi (C\rho C^*- (\rho C^* + C\rho) \Om -T\rho+\Om^2  \rho).
\]

The terms of order $t$ in $p_j$ give contributions of lower order, so that
to the main order in small $h$ we have
\[
\frac{U_h-1}{h} f(\rho)=\frac{1}{h} p_1\left[f(\rho_1)-f(\rho)\right]
+ \frac{1}{h} p_1\left[f(\rho_2)-f(\rho)\right].
\]
\[
=\frac{1}{h}(\cos^2\phi+\sqrt h \sin\phi \cos \phi \Om)
\]
\[
\times \left[(f'(\rho),\sqrt h \tan\phi (\rho C^* + C\rho-\Om \rho)+tB_1)
+\frac12 \tan^2\phi [(\rho C^* + C\rho-\Om \rho)f''(\rho) (\rho C^* + C\rho-\Om \rho)]h\right]
\]
\[
+\frac{1}{h}(\sin^2\phi-\sqrt h \sin\phi \cos \phi \Om)
\]
\[
\times \left[ (f'(\rho),- \sqrt h \cot\phi (\rho C^* + C\rho-\Om\rho)+tB_2)
+\frac12 \cot^2\phi [(\rho C^* + C\rho-\Om \rho)f''(\rho) (\rho C^* + C\rho-\Om \rho)]h\right],
\]
where
\[
[Af''(\rho) A]=\sum_{ijkl} A_{ij} \frac{\pa^2 f}{\pa \rho_{ij}\pa \rho_{kl}} A_{kl}.
\]
The terms of order $h^{-1/2}$ cancel and we get in the main term
\[
\frac{U_h-1}{h} f(\rho)\approx \frac12 [(\rho C^* + C\rho-\Om \rho)f''(\rho) (\rho C^* + C\rho-\Om \rho)]
\]
\[
+(f'(\rho),\Om (\rho C^* + C\rho-\Om \rho)+\cos^2\phi B_1+\sin^2\phi B_2)\approx L_{dif}f(\rho)
\]
with
\begin{equation}
\label{eqdifgener}
L_{dif}f(\rho)=\frac12 [(\rho C^* + C\rho-\Om \rho)f''(\rho) (\rho C^* + C\rho-\Om \rho)]
+(f'(\rho),-i[A, \rho] -\frac12 \{C^*C,\rho\}+ C\rho C^* ),
\end{equation}
which is remarkably independent of $\phi$! Thus, taking into account the terms that were ignored
withinn the approximation, we obtained the following counterpart of Lemma \ref{lemmaongencount}:

\begin{lemma}
\label{lemmaongendif}
Under the setting considered, and for any $\phi\neq \pi k/2$, $k\in \Z$,
\begin{equation}
\label{eqdifgener1}
\|\frac{U_h-1}{h} f -Lf\|\le \sqrt h \ka \|f\|_{C^3(S(\HC_0))}
\end{equation}
for $f\in C^3(S(\HC_0))$, with $L_{dif}$ given by \eqref{eqdifgener}.
\end{lemma}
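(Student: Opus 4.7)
The plan is to make rigorous the formal expansion already carried out in the paragraphs preceding the statement, supplying quantitative control of the remainders. Since all the relevant algebraic cancellations have been performed explicitly in the text, the task reduces essentially to bookkeeping rather than new computation.

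First I would produce quantitative expansions of the unnormalised post-measurement states $\tilde\rho_j$ and the probabilities $p_j$ to order $h$, with remainders bounded uniformly on $S(\HC_0)$ by $O(h^{3/2})$. Uniformity in $\rho$ follows from compactness of $S(\HC_0)$, boundedness of the operator $H$, and convergence of the power series for $e^{-ihH}$. The normalised states $\rho_j=\tilde\rho_j/p_j$ inherit analogous expansions once $1/p_j$ is expanded geometrically; the inversion is legitimate because the hypothesis $\phi\neq \pi k/2$ gives $\min(\cos^2\phi,\sin^2\phi)>0$, and hence $p_j\ge \tfrac12\min(\cos^2\phi,\sin^2\phi)$ for all sufficiently small $h$, uniformly in $\rho$.

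Next I would apply the second-order Taylor formula with remainder
\[
\bigl| f(\rho+\xi)-f(\rho)-(f'(\rho),\xi)-\tfrac12(\xi f''(\rho)\xi)\bigr|\le \tfrac16\|\xi\|^3\,\|f\|_{C^3(S(\HC_0))}
\]
to $\xi=\rho_j-\rho=O(\sqrt h)$ and substitute into
\[
\frac{U_h-1}{h}f(\rho)=\frac{p_1}{h}[f(\rho_1)-f(\rho)]+\frac{p_2}{h}[f(\rho_2)-f(\rho)].
\]
Collecting powers of $\sqrt h$, the coefficient of $h^{-1/2}$ vanishes by the identity $\cos^2\phi\tan\phi=\sin^2\phi\cot\phi=\sin\phi\cos\phi$, and the coefficient of $h^0$ reorganises into $L_{dif}f(\rho)$. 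The $\phi$-independence of the latter rests on $\cos^2\phi\tan^2\phi+\sin^2\phi\cot^2\phi=1$, which collapses the $\tan^2\phi$ and $\cot^2\phi$ parts of $\cos^2\phi B_1+\sin^2\phi B_2$ into $C\rho C^*-\Om(C\rho+\rho C^*)+\Om^2\rho$; this then combines with the cross term $\Om(f'(\rho),\rho C^*+C\rho-\Om\rho)$ that arises from the product of the $O(\sqrt h)$ contributions to $p_j$ and $\rho_j-\rho$ to reproduce \eqref{eqdifgener} exactly.

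The main obstacle is purely organisational: one must keep track of three families of remainders --- the tail of $\rho_j$ (entering through both $f'$ and $f''$), the tail of $1/p_j$, and the third-order Taylor remainder of $f$ --- and verify that each of them, after division by $h$, contributes at most $\sqrt h\,\ka\|f\|_{C^3(S(\HC_0))}$ uniformly in $\rho$. There is no new conceptual difficulty compared with the counting case (Lemma \ref{lemmaongencount}); the one structural difference is that the diffusive scaling forces retention of the second-order Taylor term of $f$, which is precisely why $C^3$ regularity rather than $C^2$ is needed here.
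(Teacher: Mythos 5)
Your proposal is correct and follows essentially the same route as the paper: the text preceding the lemma carries out exactly this expansion of $\tilde\rho_j$, $p_j$, $1/p_j$ and the Taylor expansion of $f$, observes the cancellation of the $h^{-1/2}$ terms and the $\phi$-independence of the limit, and then states the lemma "by looking carefully at the small terms ignored." Your added remarks — the uniform lower bound on $p_j$ from $\phi\neq\pi k/2$, the explicit trigonometric identities behind the cancellations, and the observation that the $O(h^{3/2})$ third-order Taylor remainder is what forces $C^3$ rather than $C^2$ regularity — are precisely the bookkeeping the paper leaves implicit.
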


Unlike the jump-type limiting processes analysed in the previous section,
where a straightforward pure analytic proof of the well-posedness of the process generated by $L$ is available,
here an approach using SDEs is handy. Ito's formula shows that  a process generated by \eqref{eqdifgener}
can arise from solving the following Ito's SDE:

\begin{equation}
\label{eqBeleqdif}
d\rho=(-i[A, \rho] -\frac12 \{C^*C,\rho\}+ C\rho C^* ) \, dt
+\left(\rho C^* + C\rho-{\tr}\, (\rho C^* + C\rho)\rho\right) \, dW_t,
\end{equation}
where $W_t$ is a standard one-dimensional Wiener process.
This SDE is the {\it Belavkin quantum filtering SDE} for normalized states corresponding
to the {\it diffusive type observation}.

\begin{theorem}
\label{thBeleqdif}
Let $\HC_0=\C^n$ and $A$, $C$ be $n\times n$ square matrices with $A$ being Hermitian. Then:

(i) The operator  \eqref{eqdifgener} generates a Feller process $O_t^{\rho}$ in $S(\HC_0)$ and
the corresponding Feller semigroup $T_t$ in $C(S(\HC_0))$ having the spaces
 $C^2(S(\HC_0))$ and $C^3(S(\HC_0))$ as invariant cores, and $T_s$ are bounded in these spaces
 uniformly for $s\in [0,t]$ with any $t>0$. This process is given by the solutions to SDE
 \eqref{eqBeleqdif}, which is well posed as a diffusion equation in $S(\HC_0)$.

(ii) The scaled discrete semigroups $(U_h)^{[s/h]}$ converge to the semigroup $T_s$, as $h\to 0$, so that the corresponding
processes converge in distribution, with the following rates of convergence:
\begin{equation}
\label{eq1thBeleqdif}
\|(U_h)^{[s/h]} -T_sf\| \le \sqrt h s \ka(t) \|f\|_{C^3(S(\HC_0))},
\end{equation}
where the constant $\ka(t)$ depends on the norms of $A$ and $C$.

(iii) The scaled semigroups $T_s^{\la}$ converge to the semigroup $T_s$, as $\la\to 0$, so that the corresponding
processes converge in distribution, with the following rates of convergence:
\begin{equation}
\label{eq2thBeleqdif}
\|T_s^{\la}f -T_sf\| \le \la \sqrt s \ka(t) \|f\|_{C^3(S(\HC_0))}.
\end{equation}
\end{theorem}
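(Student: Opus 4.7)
My plan follows the three-step template of Theorem \ref{thBeleqcount}, rerouting the proof of part (i) through SDE theory rather than through the Brezis criterion, since the diffusive driver introduces tangential fluctuations at the boundary of $S(\HC_0)$ that a purely deterministic invariance argument cannot handle.

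For part (i), pathwise existence and uniqueness of solutions to \eqref{eqBeleqdif} in the Euclidean ambient space of Hermitian matrices is immediate: the drift $b(\rho) = -i[A,\rho] - \frac12\{C^*C,\rho\} + C\rho C^*$ and the diffusion coefficient $\si(\rho) = \rho C^* + C\rho - \tr(\rho C^* + C\rho)\rho$ are polynomial of degree at most three in the entries of $\rho$. The real work is to show that $S(\HC_0)$ is invariant under the flow. Hermiticity is preserved because both $b$ and $\si$ map Hermitian matrices to Hermitian ones; unit trace is preserved because $\tr b(\rho) = \tr \si(\rho) = 0$ by direct cyclic-trace computation. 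Positivity is the delicate point: I would pass to the linear (unnormalized) Belavkin SDE
\[
d\tilde\rho = \bigl(-i[A,\tilde\rho] - \tfrac12\{C^*C,\tilde\rho\} + C\tilde\rho C^*\bigr)\,dt + (\tilde\rho C^* + C\tilde\rho)\,d\tilde W_t,
\]
which admits a completely-positive stochastic Lindblad-cocycle representation (so $\tilde\rho_t \ge 0$ a.s.\ whenever $\tilde\rho_0 \ge 0$), and then recover the normalized process as $\rho_t = \tilde\rho_t/\tr \tilde\rho_t$ via a Girsanov change of measure. A more elementary alternative is to verify directly that for any unit null vector $v$ of a boundary point $\rho$ one has $(v,\si(\rho)v) = 0$ and $(v,b(\rho)v) = (C^*v,\rho C^*v) \ge 0$, so that the standard boundary-invariance criterion for SDEs on convex sets applies.

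Once invariance is established, the Feller property of $T_s f(\rho) = \E f(O_s^\rho)$ on compact $S(\HC_0)$ follows from continuous dependence on initial data. The invariant-core statement I would deduce by differentiating the stochastic flow $\rho \mapsto O_s^\rho$: since $b$ and $\si$ are $C^\infty$ with uniformly bounded derivatives on $S(\HC_0)$, Kunita-type estimates for smooth SDE flows give $\|T_s f\|_{C^k(S(\HC_0))} \le e^{K_k s}\|f\|_{C^k(S(\HC_0))}$ for $k = 2, 3$, with $K_k$ depending on $k$ and on the norms of $A$ and $C$; this supplies the uniform bounds required.

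Parts (ii) and (iii) are then essentially book-keeping. For (ii) I combine Lemma \ref{lemmaongendif} with Proposition \ref{propconvsemigr}(ii) applied to the chain $C^3(S(\HC_0)) \subset C^2(S(\HC_0)) \subset C(S(\HC_0))$, the $C^k$-invariance just established verifying the hypotheses; this delivers \eqref{eq1thBeleqdif}. For (iii), formula \eqref{eqMarkchain3} identifies $L^\la = \la^{-1}(U_\la - 1)$, so Lemma \ref{lemmaongendif} supplies the generator-convergence rate required by Proposition \ref{propconvsemigr}(i) with $B = C(S(\HC_0))$ and $D = C^3(S(\HC_0))$, producing \eqref{eq2thBeleqdif}. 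The principal obstacle throughout is the positivity-preservation step of part (i): in contrast with the counting case, where the drift pointed strictly into the boundary and Brezis' criterion applied off-the-shelf, here the Wiener driver makes a deterministic inward-vector-field argument infeasible, and one really must exploit either the cocycle structure of the unnormalized equation or the specific vanishing of $\si$ on null directions.
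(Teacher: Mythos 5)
Your overall architecture matches the paper's: parts (ii) and (iii) are the same bookkeeping via Lemma \ref{lemmaongendif} and Proposition \ref{propconvsemigr}, and part (i) is correctly reduced to positivity preservation for \eqref{eqBeleqdif}. For that crucial step, however, the paper argues differently: it rewrites \eqref{eqBeleqdif} in Stratonovich form, invokes the Wong--Zakai approximation by ODEs with smoothed noise, and applies the Brezis flow-invariance theorem to those ODEs, the whole point being that in the Stratonovich-corrected drift the term $C\rho C^*$ cancels so that $(v,\om_t(\rho)v)=0$ exactly on null directions. Your first alternative (pass to the linear equation \eqref{eqBeleqdiflin}, argue positivity there, and normalize via Girsanov) is a legitimate and genuinely different route --- it is essentially the standard argument in the literature (cf.\ \cite{BarchBook}) and the paper itself sets up the Girsanov link in Section \ref{secdifcase}; but the phrase ``admits a completely-positive stochastic Lindblad-cocycle representation'' is doing all the work and would itself need a proof (e.g.\ via the robust form $\phi_t=e^{-BW_t}\xi_t$, which turns \eqref{eqBeleqdiflin} into a random ODE whose generator is manifestly positivity-preserving). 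What the paper's route buys is a self-contained argument not relying on quantum-probabilistic dilation theory; what yours buys is brevity, at the cost of importing that structure.

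Your ``more elementary alternative'' has a genuine gap. The pointwise conditions $(v,\si(\rho)v)=0$ and $(v,b(\rho)v)=(C^*v,\rho C^*v)\ge 0$ at a boundary point are correct computations, but there is no off-the-shelf ``standard boundary-invariance criterion for SDEs on convex sets'' that accepts them as stated: stochastic viability criteria require control of the It\^o--Stratonovich correction (equivalently, a quantitative statement such as $(v,\si(\rho)v)=O((v,\rho v))$ near the boundary, which here fails --- one only gets $O(\sqrt{(v,\rho v)})$, the borderline CIR-type regime). The cancellation of the ``nasty'' term $C\rho C^*$ against the Stratonovich correction is exactly what rescues the argument, and omitting it leaves the positivity claim unproved. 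A second, minor, issue: in part (ii) the chain $C^3\subset C^2\subset C$ cannot verify \eqref{eq5propconvsemigr}, since the second-order operator $L_{dif}$ maps $C^3$ only into $C^1$, not into $C^2$; the paper instead obtains \eqref{eq3propconvsemigr} from the diffusion moment estimate $\E((X_t(x)-x)^2)\le Ct$, and you would need either that or the chain $C^4\subset C^2\subset C$.
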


\begin{proof} Parts (ii) and (iii) are obtained by the same arguments as in the proof of Theorem \ref{thBeleqcount}.
One only has to mention that estimate \eqref{eq3propconvsemigr} needed to apply Proposition \ref{propconvsemigr}
follows from the standard fact of the theory of diffusion that $\E((X_t(x)-x)^2)\le Ct$ for any diffusion $X_t(x)$
with bounded smooth coefficients.
So we need only to prove (i). All claims follow if one can construct a diffusion in $S(\HC_0)$ solving
\eqref{eqBeleqdif}, because in $S(\HC_0)$ all coefficients are bounded, and then both the uniqueness of solution and the
required smoothness of solutions with respect to initial data follow automatically from the smoothness of
the coefficients by the standard tools of Ito's SDEs.
The main difficulty here lies in proving that solutions to \eqref{eqBeleqdif} preserve the set of positive matrices.
But the fact that SDE \eqref{eqBeleqdif} is well-posed in $S(\HC_0)$ is a well known fact, see e.g.
Section 3.4.1 in monograph \cite{BarchBook}. Thus one can complete a proof of Theorem \ref{thBeleqdif}
by referring  to this result.
However, a proof of \cite{BarchBook} is indirect, and the fact is really  crucial.
 Therefore, for completeness we sketch below a different direct proof that the
solutions to  \eqref{eqBeleqdif} preserve the set of positive matrices.
 In this approach we shall consider the coefficients of the equation
\eqref{eqBeleqdif}  to be given as they are only for nonnegative $\rho$ of unit trace and continued smoothly to all
Hermitian $\rho$ in such a way that these coefficients vanish outside some neighborhood of this set. The modified equations
\eqref{eqBeleqdif} have globally bounded smooth coefficients and hence have unique well defined global solutions. Thus
we really only need to show the preservation of positivity.

Our method  is based on the Stratonovich integral. Recall that the Stratonovich differential $\circ dX$
is lined with Ito's differential by the formula $Z\circ dX =Z \, dX +(1/2)dZ \, dX$. Hence
denoting
\[
B(\rho)=\rho C^* + C\rho-{\tr}\, (\rho C^* + C\rho)\rho,
\]
equation \eqref{eqBeleqdif} rewrites in Stratonovich form as
\[
d\rho=(-i[A, \rho] -\frac12 \{C^*C,\rho\}+ C\rho C^* ) \, dt
+B(\rho) \circ dW_t-\frac12 dB(\rho) \,dW_t
\]
\[
=(-i[A, \rho] -\frac12 \{C^*C,\rho\}+ C\rho C^* ) \, dt
+B(\rho) \circ dW_t
\]
\[
-\frac12 [B(\rho) C^*+CB(\rho)-{\tr} \,(\rho C^*+C\rho) B(\rho)
- {\tr}\,(B(\rho)C^*+CB(\rho))\rho] dt.
\]
Using the fundamental result of the Stratonovich integral, stating that solutions to
Stratonovich SDEs can be obtained as the limits of the solutions to the ODEs obtained
by approximating the white noise with smooth functions, we can state that the
solutions to this Stratonovich equation preserve positivity of matrices, if
the equations
\[
\dot \rho
=-i[A, \rho] -\frac12 \{C^*C,\rho\}+ C\rho C^*
+B(\rho) \phi_t
\]
\begin{equation}
\label{eq2thBeleqdifStappr}
-\frac12 [B(\rho) C^*+CB(\rho)-{\tr} \,(\rho C^*+C\rho) B(\rho)
- {\tr}\,(B(\rho)C^*+CB(\rho))\rho]
\end{equation}
preserve the set of positive matrices for any continuous function $\phi_t$.
But this follows by the Brezis Theorem \ref{thBrezis}.
To see this we substitute the expression for $B(\rho)$ in the first three places of the
last square bracket yielding the equation
\[
\dot \rho
=-i[A, \rho] -\frac12 \{C^*C,\rho\}
+B(\rho) \phi_t
\]
\begin{equation}
\label{eq2thBeleqdifStappr1}
-\frac12 [\rho (C^*)^2+C^2\rho +({\tr} \,(\rho C^*+C\rho))^2 \rho
- {\tr}\,(B(\rho)C^*+CB(\rho))\rho]
\end{equation}
(the key point is that the 'nasty' term $C\rho C^*$ cancels).
It is seen that Theorem \ref{thBrezis} applies, because whenever $(v,\rho v)=0$, the r.h.s.
$\om_t(\rho)$ of equation \eqref{eq2thBeleqdifStappr1} satisfies $(v,\om_t(\rho) v)=0$ for
any function $\phi_t$. The details of the argument are the same as in the proof of Theorem \ref{thBeleqcount}.

\end{proof}

\begin{remark} The methods developed can be used to extend this result to infinite dimensional $\HC_0$.
However, unlike the situation with counting observations, explained in Remark \ref{remoninfindim},
there is some subtlety here in working with SDEs in the space of trace class operators, which are
not going to discuss in this paper.
\end{remark}

A remarkable property of the SDEs \eqref{eqBeleqcount} and \eqref{eqBeleqdif} is that they preserve the pure states.
Namely if the initial state $\rho$ was pure, $\rho=\psi\otimes \bar \psi$, then it remains pure for all times.
Namely, one can check by a direct application of Ito's formula that if $\phi$ satisfies the SDE
\begin{equation}
\label{eqqufiBnonlin}
d\phi=-[i(A-\langle Re \, C\rangle_{\phi} \, Im \, C)
+\frac12(LC-\langle Re \, C\rangle_{\phi})^*(C-\langle Re \, C\rangle_{\phi})]\phi \, dt
+ (C-\langle Re \, C\rangle_{\phi})\phi \, dW_t,
\end{equation}
then $\rho=\psi\otimes \bar \psi$ satisfies equation \eqref{eqBeleqdif}. Equation \eqref{eqqufiBnonlin}
is the Belavkin quantum filtering equation for pure states. It looks much simpler for the most important case
of self-adjoint $C$:
\begin{equation}
\label{eqqufiBnonlins}
d\phi=-[iA+\frac12(L-\langle C\rangle_{\phi})^2]\phi \, dt
+ (C-\langle C\rangle_{\phi})\phi \, dW_t.
\end{equation}

Another key observation is that there exists an equivalent linear version of \eqref{eqBeleqdif}.
Namely assume that $\xi$ solves the following {\it Belavkin quantum filtering SDE} for non-normalized states:
\begin{equation}
\label{eqBeleqdiflin}
d\xi=(-i[A, \xi] -\frac12 \{C^*C,\xi\}+ C\xi C^* ) \, dt
+(\xi C^* + C\xi) \, dY_t,
\end{equation}
where $Y_t$ is a Brownian motion under a certain measure. Applying Ito's formula to $\rho=\xi/{\tr} \, \xi$ one finds that
$\rho$ satisfies \eqref{eqBeleqdif} with the process $W$ satisfying the equation
\begin{equation}
\label{eqBeleqdiflinlink}
dW_t=dY_t-{\tr} \,(\xi C^* + C\xi) \, dt.
\end{equation}

It follows from the famous Girsanov formula that if $Y_t$ was a Wiener process, then $W_t$ would be also a Wiener
process under some different but equivalent measure with respect to one defining $Y_t$. Hence a solution $\xi_t$ to
 the linear equation \eqref{eqBeleqdiflin} with some Brownian motion $Y_t$ yields
the solution $\rho=\xi/{\tr} \, \xi$ to \eqref{eqBeleqdif} with some other Brownian motion $W_t$.

\section{Observations via different channels}
\label{secseveralchan}

Let us now extend the theory to the case of several channels of observation. Namely,
we take
\begin{equation}
\label{interHam}
\HC=\HC_0\otimes \C^2 \otimes \cdots \otimes \C^2, \quad (K \, \text{multipliers}\, \C^2),
\end{equation}
and the atom (system with Hilbert space $\HC_0$) is supposed to interact with each of the $K$
measuring devices with the state space $\C^2$. Each of the devises is equipped with
the standard basis $(e_0^j,e_1^j)$ with $e_0^j$ chosen as a vacuum vector, that is as its stationary state,
with the corresponding density matrix being $\Om_j= |e_0^j\rangle \langle e_0^j|$.
The Hamiltonian is given by the sum $H=H_0+\sum_{k=1}^KH_k$, where
$H_0=A\otimes I^{\otimes k}$ describes the free dynamics of the atom, and $H_j$ connects the atom
with the $j$th device. The same scaling $1/\sqrt t$ applies to the interaction parts.

Thus  $H$ is specified by $k+1$ operators $A,C_1, \cdots, C_K$ in $\HC_0$, so that $H_j$ are
give by the formulas:
 \begin{equation}
\label{interHam1}
\begin{aligned}
& H_0 (h\otimes e_{i_1}^1 \otimes \cdots \otimes e_{i_K}^K)=A h \otimes e^1_{i_1} \otimes \cdots \otimes e_{i_K}^K, \\
& H_j (h\otimes e_{i_1}^1 \otimes \cdots \otimes e_{i_K}^K)|_{e_{i_j}^j=e_1^j}
=-\frac{i}{\sqrt t} C_j^* h\otimes e_{i_1}^1 \otimes \cdots \otimes e_{i_K}^K)|_{e^j_{i_j}=e_0^j}, \quad j>0, \\
& H_j (h\otimes e_{i_1}^1 \otimes \cdots \otimes e_{i_K}^K)|_{e_{i_j}^j=e_0^j}
 = \frac{i}{\sqrt t} C_j (h\otimes e_{i_1}^1 \otimes \cdots \otimes e_{i_K}^K)|_{e_{i_j}^j=e_1^j}, \quad j>0.
 \end{aligned}
 \end{equation}

At a starting time of an interaction the devices are supposed to be set to their vacuum states,
so that a state $\rho$ on $\HC_0=\C^n$ lifts to $\HC$ as
\[
\rho_{\HC}=\rho \otimes \Om_1 \otimes \cdots \otimes \Om_K.
\]

The observation procedure can be specified by choosing two orthogonal projectors
 $P_0^j$ and $P_1^j$ in the space $\C^2$ of each device (that is in each channel of observation)
 arising from some observables with the spectral decompositions $\sum_l \la_l P_l^j$.
This choice  yields the totality of $2^K$ orthogonal projectors in $\HC$,
\[
I\otimes P_{i_1}^1 \otimes \cdots \otimes  P_{i_K}^K,
\]
so that the possible new non-normalized states after each step of interaction and measurement are
\begin{equation}
\label{newstatedouble}
\tilde \rho_t^{i_1 \cdots i_K}
={\tr}_{p1\cdots K} [(I\otimes P_{i_1}^1 \otimes \cdots \otimes  P_{i_K}^K)
e^{-itH} \rho_{\HC}  e^{itH} (I\otimes P_{i_1}^1 \otimes \cdots \otimes  P_{i_K}^K)],
\end{equation}
where
\begin{equation}
\label{eqMarkchain1m}
\ga_t= e^{-itH} \rho_{\HC}  e^{itH} = e^{-itH}(\rho \otimes \Om_1 \otimes \cdots \otimes \Om_K)  e^{itH},
\end{equation}
and ${\tr}_{p1\cdots K}$ is the partial trace with respect to all spaces, but for $\HC_0$.
These states may occur with the probabilities
\begin{equation}
\label{eqMarkchain2m}
p_{i_1 \cdots i_K}(t)={\tr} \, [\ga_t (I\otimes P_{i_1} \otimes \cdots \otimes P_{i_K})]
={\tr} \tilde \rho_t^{i_1 \cdots i_K}.
\end{equation}

Therefore the multichannel extension of the discrete time {\it Markov chain of successive indirect observations} given by \eqref{eqMarkchain} and \eqref{eqMarkchain1} is given by $2^K$ possible transitions of $\rho$ to the states
\begin{equation}
\label{eqMarkchainmult}
\rho_t^{i_1 \cdots i_K}= \frac{1}{p_{i_1 \cdots i_K}} {\tr}_{p1\cdots K}
[(I\otimes P_{i_1} \otimes \cdots \otimes P_{i_K})
\ga_t (I\otimes P_{i_1} \otimes \cdots \otimes P_{i_K})],
\end{equation}
where $\ga_t$ and the probabilities $p_{i_1 \cdots i_K}$ are given by
\eqref{eqMarkchain1m} and \eqref{eqMarkchain2m}. The transition operator of this Markov chain writes down as
\begin{equation}
\label{eqMarkchain2mult}
U_t f(\rho)=\E f(\rho_t)=\sum_{i_1 \cdots i_K} p_{i_1 \cdots i_K}(t) f(\rho_t^{i_1 \cdots i_K}).
\end{equation}

The operators in $\HC$ are best described in terms of blocks. Namely, writing $\HC=\oplus \HC_{i_1 \cdots i_K}$,
with $\HC_{i_1 \cdots i_K}$ generated by $\HC_0\otimes e_{i_1} \otimes \cdots \otimes e_{i_K}$,
we can represent an operator $\LC$ in $\HC$ by $4^K$ operators
$L_{i_1 \cdots i_K}^{j_1 \cdots j_K}$ in $\HC$, so that
\[
\LC ( h^{i_1 \cdots i_K}\otimes e_{i_1} \otimes \cdots \otimes e_{i_K})
=\sum_{j_1 \cdots j_K} L_{i_1 \cdots i_K}^{j_1 \cdots j_K} h^{i_1 \cdots i_K}
\otimes  e_{j_1} \otimes \cdots \otimes e_{j_K}.
\]

The composition and partial trace in this notations are expressed by the following formulas:
\begin{equation}
\label{eqblockcom}
(\LC_1 \LC_2)_{i_1 \cdots i_K}^{j_1 \cdots j_K}
=\sum_{m_1 \cdots m_K} (\LC_1)_{m_1 \cdots m_K}^{j_1 \cdots j_K}
(\LC_2)^{m_1 \cdots m_K}_{j_1 \cdots j_K},
\end{equation}
\begin{equation}
\label{eqblocktr}
 {\tr}_{p1\cdots K} \LC
 =\sum_{j_1 \cdots j_K} L_{j_1 \cdots j_K}^{j_1 \cdots j_K}.
\end{equation}

For simplicity let us perform detailed calculations for $K=2$ (they are quite similar in the general case).
Thus $\HC=\C^n\otimes \C^2 \otimes \C^2$ and $H=H_0+H_1+H_2$. Let us denote the bases of the two devices
$\{e_k\}$ and $\{f_k\}$ respectively. Formulas \eqref{interHam1} rewrite in a simpler way as
 \[
H_0 (h\otimes e_k \otimes f_j)=A h \otimes e_k \otimes f_j,
 \]
\[
H_1 (h\otimes e_1 \otimes f_j)=-iC_1^* h \otimes e_0 \otimes f_j/ \sqrt t,
\quad
 H_1 (h\otimes e_0 \otimes f_j)=iC_1 h \otimes e_1 \otimes f_j \sqrt t,
 \]
\[
H_2 (h\otimes e_j \otimes f_1)=-iC_2^* h \otimes e_j \otimes f_0 /\sqrt t,
\quad
 H_2 (h\otimes e_j \otimes f_0)=iC_2 h \otimes e_j \otimes f_1  /\sqrt t,
 \]

With the chosen vacuum vectors $e_0=(1,0)$ in the first device and $f_0=(1,0)$ in the second device,
a state $\rho$ on $\HC_0=\C^n$ lifts to $\HC$ as
\[
\rho_{\HC}=\rho \otimes |e_0\rangle \langle e_0| \otimes |f_0\rangle \langle f_0|.
\]

The operators $\LC$ in $\HC$ are described by 16 operators
$L_{jk}^{lm}$ in $\HC$.
To shorten the formulas, let us perform calculations without scaling  $C_j$ (without the factor $1/\sqrt t$)
and will restore the scaling at the end.
In term of the blocks we can write:
\[
(\rho_{\HC})^{ml}_{jk}= \de^m_0  \de^l_0  \de^0_j  \de^0_k \rho.
\]
\[
(H_1)^{ml}_{jk}=i\de^l_k \de^m_{\bar j}(C_1 \de^0_j- C_1^* \de^1_j),
\quad
(H_2)^{ml}_{jk}=i\de^m_j \de^l_{\bar k}(C_2 \de^0_k- C_2^* \de^1_k),
\]
where we have introduced the following notations: for $i$ being $0$ or $1$ we denote
$\bar i$ as being $1$ and $0$ respectively.

By \eqref{eqblockcom} it follows that
\[
[H_1, \rho_{\HC}]^{ml}_{jk}
=i\sum \de^l_q \de^m_{\bar p}(C_1 \de^0_p- C_1^* \de^1_p)\, \de^p_0  \de^q_0  \de^0_j  \de^0_k \rho
-i\sum \de^m_0  \de^l_0  \de^0_p  \de^0_q \rho \, \de^q_k \de^p_{\bar j}(C_1 \de^0_j- C_1^* \de^1_j)
\]
\[
=i \de^0_k \de^l_0 (\de^0_j \de^m_1 C_1\rho+\de^1_j \de^m_0 \rho C_1^*)
=i \de^0_k \de^l_0 \de^m_{\bar j} (\de^0_j C_1\rho+\de^1_j \rho C_1^*).
 \]

Next
\[
(H_1^2)^{ml}_{jk}=\sum (H_1)^{ml}_{pq} (H_1)^{pq}_{jk}
=-\sum \de^l_q \de^m_{\bar p}(C_1 \de^0_p- C_1^* \de^1_p)
\de^q_k \de^p_{\bar j}(C_1 \de^0_j- C_1^* \de^1_j)
\]
\[
=-\de^l_k \de^m_j (C_1\de^1_j-C_1^*\de_j^0)(C_1\de^0_j-C_1^* \de^1_j)
=\de^l_k \de^m_j ( \de^1_j C_1C_1^*+ \de^0_jC_1^* C_1),
\]
\[
(H_2^2)^{ml}_{jk}=\sum (H_2)^{ml}_{pq} (H_2)^{pq}_{jk}
=-\de^m_p \de^l_{\bar q}(C_2 \de^0_q- C_2^* \de^1_q)
\de^p_j \de^q_{\bar k}(C_2 \de^0_k- C_2^* \de^1_k)
\]
\[
=-\de^m_j \de^l_k (C_2 \de^1_k- C_2^* \de_k^0)(C_2 \de^0_k- C_2^* \de^1_k)
=\de^m_j \de^l_k (\de^1_k C_2 C_2^*+ \de^0_kC_2^* C_2),
\]
\[
(H_1 H_2)^{ml}_{jk}=\sum (H_1)^{ml}_{pq} (H_2)^{pq}_{jk}
=-\sum \de^l_q \de^m_{\bar p}(C_1 \de^0_p- C_1^* \de^1_p)
\de^p_j \de^q_{\bar k}(C_2 \de^0_k- C_2^* \de^1_k)
\]
\[
=-\de^l_{\bar k} \de^m_{\bar j}(C_1 \de^0_j- C_1^* \de^1_j)(C_2 \de^0_k- C_2^* \de^1_k),
\]
\[
(H_2 H_1)^{ml}_{jk}=\sum (H_2)^{ml}_{pq} (H_1)^{pq}_{jk}
=-\de^m_p \de^l_{\bar q}(C_2 \de^0_q- C_2^* \de^1_q)
\de^q_k \de^p_{\bar j}(C_1 \de^0_j- C_1^* \de^1_j)
\]
\[
=-\de^l_{\bar k} \de^m_{\bar j}(C_2 \de^0_k- C_2^* \de^1_k)(C_1 \de^0_j- C_1^* \de^1_j),
\]
and
\[
(H_1\rho_{\HC}H_1)^{ml}_{jk}
=(H_1\rho_{\HC})^{ml}_{pq} (H_1)^{pq}_{jk}
=-\de^0_q \de^l_0 \de^m_{\bar p} \de^0_p C_1\rho \,
\de^q_k \de^p_{\bar j}(C_1 \de^0_j- C_1^* \de^1_j)
\]
\[
=\de^0_k \de^l_0 \de^1_j \de^m_1 C_1\rho C_1^*,
\]
\[
(H_2\rho_{\HC}H_2)^{ml}_{jk}
=(H_2\rho_{\HC})^{ml}_{pq} (H_2)^{pq}_{jk}
=-\de^m_0\de^0_p   \de^l_{\bar q}\de^0_q C_2\rho \,
\de^p_j \de^q_{\bar k}(C_2 \de^0_k- C_2^* \de^1_k)
\]
\[
=\de^1_k \de^l_1 \de^0_j \de^m_0 C_2\rho C_2^*,
\]
\[
(H_1\rho_{\HC}H_2)^{ml}_{jk}
=(H_1\rho_{\HC})^{ml}_{pq} (H_2)^{pq}_{jk}
=-\de^0_q \de^l_0 \de^m_{\bar p} \de^0_p C_1\rho \,
\de^p_j \de^q_{\bar k}(C_2 \de^0_k- C_2^* \de^1_k)
\]
\[
= \de^1_0 \de^m_1 \de^0_j \de^1_k C_1\rho C_2^*,
\]
\[
(H_2\rho_{\HC}H_1)^{ml}_{jk}
=(H_2\rho_{\HC})^{ml}_{pq} (H_1)^{pq}_{jk}
=-\de^m_0\de^0_p   \de^l_{\bar q}\de^0_q C_2\rho \,
\de^q_k \de^p_{\bar j}(C_1 \de^0_j- C_1^* \de^1_j)
\]
\[
= \de^1_1 \de^m_0 \de^1_j \de^0_k C_2\rho C_1^*.
\]
Therefore
\[
(H_1+H_2)\rho_{\HC}(H_1+H_2)^{ml}_{jk}=\de^0_k \de^l_0 \de^1_j \de^m_1 C_1\rho C_1^*
+\de^1_k \de^l_1 \de^0_j \de^m_0 C_2\rho C_2^*
\]
\[
+\de^1_0 \de^m_1 \de^0_j \de^1_k C_1\rho C_2^*
+ \de^1_1 \de^m_0 \de^1_j \de^0_k C_2\rho C_1^*.
\]

Next,
\[
\{H_1^2, \rho_{\HC}\}^{ml}_{jk}=(H_1^2)^{ml}_{pq}(\rho_{\HC})^{pq}_{jk}
+(\rho_{\HC})^{ml}_{pq}(H_1^2)^{pq}_{jk}
\]
\[
=\de^l_q \de^m_p (\de^1_p C_1C_1^* +\de^0_p C_1^*C_1) \, \de^p_0  \de^q_0  \de^0_j  \de^0_k \rho
+\de^m_0  \de^l_0  \de^0_p  \de^0_q \rho \, \de^q_k \de^p_j (\de^1_j C_1C_1^* +\de^0_j C_1^*C_1)
=\de^m_0  \de^l_0  \de^0_j  \de^0_k\{C_1^* C_1, \rho\},
\]
\[
\{H_2^2, \rho_{\HC}\}^{ml}_{jk}=(H_2^2)^{ml}_{pq}(\rho_{\HC})^{pq}_{jk}
+(\rho_{\HC})^{ml}_{pq}(H_2^2)^{pq}_{jk}
\]
\[
=\de^m_p \de^l_q (\de^1_q C_2 C_2^*+ \de^0_qC_2^* C_2)\, \de^p_0  \de^q_0  \de^0_j  \de^0_k \rho
+\de^m_0  \de^l_0  \de^0_p  \de^0_q \rho  \, \de^p_j \de^q_k (\de^1_k C_2 C_2^*+ \de^0_kC_2^* C_2)
=\de^m_0  \de^l_0  \de^0_j  \de^0_k\{C_2^* C_2, \rho\},
\]
and
\[
\{H_1 H_2, \rho_{\HC}\}^{ml}_{jk}=(H_1H_2)^{ml}_{pq}(\rho_{\HC})^{pq}_{jk}
+(\rho_{\HC})^{ml}_{pq}(H_1H_2)^{pq}_{jk}
\]
\[
=-\de^l_{\bar q} \de^m_{\bar p}(C_1 \de^0_p- C_1^* \de^1_p)(C_2 \de^0_q- C_2^* \de^1_q)
\de^p_0  \de^q_0  \de^0_j  \de^0_k \rho
-\de^m_0  \de^l_0  \de^0_p  \de^0_q \rho
\de^q_{\bar k} \de^p_{\bar j}(C_1 \de^0_j- C_1^* \de^1_j)(C_2 \de^0_k- C_2^* \de^1_k)
\]
\[
= -\de^l_1  \de^m_1  \de^0_j  \de^0_k C_1C_2\rho - \de^l_0  \de^m_0  \de^1_j  \de^1_k \rho C_1^*C_2^*,
\]

\[
\{H_2 H_1, \rho_{\HC}\}^{ml}_{jk}=(H_2H_1)^{ml}_{pq}(\rho_{\HC})^{pq}_{jk}
+(\rho_{\HC})^{ml}_{pq}(H_2H_1)^{pq}_{jk}
\]
\[
-\de^l_{\bar q} \de^m_{\bar p}(C_2 \de^0_p- C_2^* \de^1_q)(C_1 \de^0_p- C_1^* \de^1_q)
\de^p_0  \de^q_0  \de^0_j  \de^0_k \rho
-\de^m_0  \de^l_0  \de^0_p  \de^0_q \rho
\de^q_{\bar k} \de^p_{\bar j}(C_2 \de^0_k- C_2^* \de^1_k)(C_1 \de^0_j- C_1^* \de^1_j)
\]
\[
=- \de^l_1  \de^m_1  \de^0_j  \de^0_k C_2C_1 \rho -\de^l_0  \de^m_0  \de^1_j  \de^1_k\rho C_2^*C_1^*
\]
Thus,
\[
\{(H_1+H_2)^2,\rho_{\HC}\}^{ml}_{jk}
=\{H_1^2+H_2^2+H_1H_2+H_2H_1, \rho_{\HC}\}^{ml}_{jk}
\]
\[
=\de^m_0  \de^l_0  \de^0_j  \de^0_k\{C_1^*C_1 +C_2^*C_2, \rho\}
-\de^l_1  \de^m_1  \de^0_j  \de^0_k \{C_1,C_2\}\rho
-\de^l_0  \de^m_0  \de^1_j  \de^1_k\rho \{C_1^*, C_2^*\}.
\]

Thus all parts of \eqref{smalltimegroup} are collected.

Let us turn to \eqref{newstatedouble}. From the calculations with a single channel we know that one has to
distinguish diagonal and non-diagonal projectors $P^j_k$. Let us start with the case, when in both devises the projectors
are diagonal, that is
\[
P_0^1=P_0^2=\begin{pmatrix} 1 & 0 \\  0 & 0 \end{pmatrix},
\quad
P_1^1=P_1^2=\begin{pmatrix} 0 & 0 \\  0 & 1 \end{pmatrix}.
\]
Let us calculate
\[
 (I\otimes P_j^1 \otimes P_k^2) \LC (I\otimes P_j^1 \otimes P_k^2)
\]
for arbitrary $\LC$.

 We have
 \[
 (I\otimes P_i^1 \otimes P_r^2) \sum h^{jk} \otimes e_j\otimes f_k=h^{ir},
 \]
 \[
 (I\otimes P_i^1 \otimes P_r^2)^{ml}_{jk}=\de^i_j  \de^r_k  \de^m_i  \de^1_r.
 \]
 So
\[
 ((I\otimes P_i^1 \otimes P_r^2) \LC)^{ml}_{jk}
 =(I\otimes P_i^1 \otimes P_r^2)^{ml}_{pq}  \LC^{pq}_{jk}
=\de^i_p  \de^r_q  \de^m_i  \de^1_r  \LC^{pq}_{jk}
=  \de^m_i  \de^1_r \LC^{ir}_{jk}
\]
and
\[
((I\otimes P_i^1 \otimes P_r^2) \LC (I\otimes P_i^1 \otimes P_r^2))^{ml}_{jk}
=((I\otimes P_i^1 \otimes P_r^2) \LC)^{ml}_{pq}
(I\otimes P_i^1 \otimes P_r^2)^{pq}_{jk}
\]
\[
= \de^m_i  \de^1_r \LC^{ir}_{pq} \de^i_j  \de^r_k  \de^p_i  \de^q_r
=\de^m_i  \de^1_r \de^i_j  \de^r_k \LC^{ir}_{ir}.
\]
Thus
\[
{\tr}_{p12} ((I\otimes P_i^1 \otimes P_r^2) \LC (I\otimes P_i^1 \otimes P_r^2))
=\LC^{ir}_{ir},
\]
and
\[
\tilde \rho_{ir}=(e^{-itH} \rho  e^{itH})^{ir}_{ir},
\quad
p_{ir} ={\tr} (e^{-itH} \rho  e^{itH})^{ir}_{ir}.
\]

Thus we have
\[
[H_1+H_2, \rho_{\HC}]^{jk}_{jk}=0,
\]
\[
(H_1+H_2)\rho_{\HC}(H_1+H_2)^{jk}_{jk}=\de^0_k \de^1_j  C_1\rho C_1^*
+\de^1_k \de^0_j C_2\rho C_2^*,
\]
\[
\{H_1^2+H_2^2+H_1H_2+H_2H_1, \rho_{\HC}\}^{jk}_{jk}
= \de^0_j  \de^0_k\{C_1^*C_1 +C_2^*C_2, \rho\}.
\]

Restoring scaling $C \to C/\sqrt t$ yields approximately
\[
 (e^{-itH} \rho_{\HC}  e^{itH})^{jk}_{jk}
 =(\rho_{\HC}-it [H,\rho_{\HC}]+t^2 (H\rho_{\HC} H-\frac12 \{H^2,\rho_{\HC}\}))^{jk}_{jk}
 \]
 \[
 = \de^0_j  \de^0_k (\rho-it[A,\rho])+t [\de^0_k \de^1_j  C_1\rho C_1^*
+\de^1_k \de^0_j C_2\rho C_2^*-\frac12 \de^0_j  \de^0_k\{C_1^*C_1 +C_2^*C_2, \rho\}]
\]
and thus
\[
\tilde \rho_{jk}= \de^0_j  \de^0_k (\rho-it[A,\rho])+t [\de^0_k \de^1_j  C_1\rho C_1^*
+\de^1_k \de^0_j C_2\rho C_2^*-\frac12 \de^0_j  \de^0_k\{C_1^*C_1 +C_2^*C_2, \rho\}],
\]
\[
p_{jk} =\de^0_j  \de^0_k +t [\de^0_k \de^1_j  {\tr} (C_1\rho C_1^*)
+\de^1_k \de^0_j {\tr} (C_2\rho C_2^*)-\de^0_j  \de^0_k {\tr}((C_1^*C_1 +C_2^*C_2) \rho)].
\]

Thus $p_{11}=0$,
\[
\rho_{00}=\frac{\tilde \rho_{00}}{p_{00}}
=(\rho-it[A,\rho] -\frac12 t\{C_1^*C_1 +C_2^*C_2, \rho\})(1+ t \,{\tr}((C_1^*C_1 +C_2^*C_2) \rho)),
\]
\[
=\rho -it[A,\rho]-\frac12 t\{C_1^*C_1 +C_2^*C_2, \rho\}+t \, {\tr}((C_1^*C_1 +C_2^*C_2) \rho) \rho,
\]
\[
\rho_{10}=\frac{\tilde \rho_{10}}{p_{10}}=\frac {C_1\rho C_1^*} {{\tr} (C_1\rho C_1^*)},
\quad
\rho_{01}=\frac{\tilde \rho_{01}}{p_{01}}=\frac {C_2\rho C_2^*} {{\tr} (C_2\rho C_2^*)}.
\]
Thus we get, up to terms of order $h$ in small $h$, that
\[
\frac{U_h-1}{h} f(\rho)=\frac{1}{h}\sum_{jk} p_{jk} \left[f(\rho_{jk})-f(\rho)\right]
\]
\[
=\frac{1}{h} p_{00}[f(\rho-it[A,\rho]-\frac12 t\{C_1^*C_1 +C_2^*C_2, \rho\}+h \, {\tr}((C_1^*C_1 +C_2^*C_2) \rho) \rho)-f(\rho)]
\]
\[
+\frac{1}{h} p_{10} \left[ f\left(\frac {C_1\rho C_1^*} {{\tr} (C_1\rho C_1^*)}\right) -f(\rho)\right]
+\frac{1}{h} p_{01} \left[ f\left(\frac {C_2\rho C_2^*} {{\tr} (C_2\rho C_2^*)}\right) -f(\rho)\right]
\]
\[
= \left( f'(\rho), -\frac12 \{C_1^*C_1, \rho\}+ {\tr}(C_1 \rho C_1^*) \rho
-\frac12 \{C_2^*C_2, \rho\}+ {\tr}(C_2 \rho C_2^*) \rho\right)
\]
\[
+ {\tr} (C_1\rho C_1^*)\left[ f\left(\frac {C_1\rho C_1^*} {{\tr} (C_1\rho C_1^*)}\right) -f(\rho)\right]
+{\tr} (C_2\rho C_2^*) \left[ f\left(\frac {C_2\rho C_2^*} {{\tr} (C_2\rho C_2^*)}\right) -f(\rho)\right].
\]

Summarising and extending to arbitrary number of channels $k$ we can conclude that we proved the following
extension of Lemma \ref{lemmaongencount}.

 \begin{lemma}
\label{lemmaongencountmultich}
Under the setting considered,
\begin{equation}
\label{eqjumpgener1m}
\|\frac{U_h-1}{h} f -Lf\|\le \sqrt h \ka \|f\|_{C^2(S(\HC_0))}
\end{equation}
for $f\in C^2(S(\HC_0))$, with $L$ given by
\[
L_{count}f(\rho)=-i[A, \rho] \, dt+\sum_{j=1}^K  \left( f'(\rho), -\frac12 \{C_j^*C_j, \rho\}+ {\tr}(C_j \rho C_j^*) \rho\right)
\]
\begin{equation}
\label{eqjumpgener2m}
+ \sum_{j=1}^K \,{\tr}\, (C_j\rho C_j^*)\left[ f\left(\frac {C_j\rho C_j^*} {{\tr} (C_j\rho C_j^*)}\right) -f(\rho)\right].
\end{equation}
\end{lemma}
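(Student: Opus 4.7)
The plan is to extend the $K = 2$ calculation performed above to arbitrary $K$ via the block notation \eqref{eqblockcom}--\eqref{eqblocktr} together with a simple channel-decoupling observation. As in \eqref{interHam1}, write $H = H_0 + \sum_{j=1}^K H_j$ with $H_0 = A\otimes I^{\otimes K}$ and each $H_j$ (of order $1/\sqrt t$) acting nontrivially only on the atom and the $j$-th device, where it flips $e_0^j \leftrightarrow e_1^j$. The key decoupling observation is that for any $\mathbf i \in \{0,1\}^K$ and any $j \ne k$,
\[
(H_j \rho_{\HC} H_k)^{\mathbf i}_{\mathbf i} = 0, \qquad (H_j H_k \rho_{\HC})^{\mathbf 0}_{\mathbf 0} = 0 = (\rho_{\HC} H_j H_k)^{\mathbf 0}_{\mathbf 0},
\]
because $H_j$ on the left flips the left multi-index only in channel $j$ while $H_k$ on the right (or further on the left) flips a different channel, so matching the diagonal indices forces $j = k$. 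Hence on the diagonal blocks the interaction contributes only through the diagonal sum $\sum_j H_j\rho_{\HC}H_j$, and the $K = 2$ identities for $[H,\rho_{\HC}]$, $H\rho_{\HC}H$ and $\{H^2,\rho_{\HC}\}$ extend verbatim.

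Substituting this into \eqref{smalltimegroup} and restricting to the $(\mathbf i, \mathbf i)$ diagonal block of the partial trace gives, up to norm error $O(t^{3/2})$, three mutually exclusive regimes: the block $\mathbf i = \mathbf 0$ yields $\tilde\rho_{\mathbf 0} = \rho - it[A,\rho] - (t/2)\sum_j\{C_j^*C_j,\rho\}$; the blocks with a single $1$ in position $j$ yield $\tilde\rho_{\mathbf i} = t\,C_j\rho C_j^*$; and all blocks with $|\mathbf i|\ge 2$ yield $\tilde\rho_{\mathbf i} = O(t^{|\mathbf i|})$, since such blocks require at least $|\mathbf i|$ scaled interactions $H_j$ on each side of $\rho_{\HC}$, and the $t^n/n!$ prefactors from $e^{\pm itH}$ combine with these to give an overall $t^{|\mathbf i|}$. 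Taking traces then yields $p_{\mathbf 0} = 1 - t\sum_j\tr(C_j^*C_j\rho) + O(t^2)$, $p_{\mathbf i} = t\,\tr(C_j\rho C_j^*) + O(t^2)$ for single-$1$ multi-indices, and $p_{\mathbf i} = O(t^2)$ otherwise.

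Normalising $\rho_{\mathbf i} = \tilde\rho_{\mathbf i}/p_{\mathbf i}$, substituting into
\[
\frac{U_h - 1}{h} f(\rho) = \frac{1}{h} \sum_{\mathbf i} p_{\mathbf i}(h) (f(\rho_{\mathbf i}) - f(\rho))
\]
with $h = t$, and Taylor expanding $f$ to first order around $\rho$ in the $\mathbf 0$ block and to zeroth order in the single-$1$ blocks reproduces $L_{count}f(\rho)$ as in \eqref{eqjumpgener2m}: each single-$1$ multi-index furnishes one of the $K$ jump terms along with its compensating drift $\tr(C_j\rho C_j^*)\rho$, and the $\mathbf 0$ block furnishes $-i[A,\rho]$ and $-(1/2)\sum_j\{C_j^*C_j,\rho\}$ together with the matching compensators. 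The multi-indices with $|\mathbf i|\ge 2$ contribute globally at most $h^{-1}\sum_{|\mathbf i|\ge 2} p_{\mathbf i}\cdot 2\|f\|_{C^0} = O(h)\|f\|_{C^0}$, while the Taylor remainder and the $O(h^{3/2})$ corrections to $\rho_{\mathbf 0}$ are bounded by $\sqrt h\,\ka\,\|f\|_{C^2(S(\HC_0))}$, giving \eqref{eqjumpgener1m}.

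The main obstacle is the multi-index bookkeeping for general $K$: once the decoupling identity and the block-weight bound $\tilde\rho_{\mathbf i} = O(t^{|\mathbf i|})$ are in place, the rest of the argument is a parallel application of the single-channel analysis of Section \ref{seccountcase} across the $K$ channels, with error control identical to the explicit $K = 2$ calculation already spelled out in the text.
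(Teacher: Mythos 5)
Your proposal is correct and follows essentially the same route as the paper: the small-time expansion \eqref{smalltimegroup}, the block decomposition over multi-indices, projection onto the diagonal blocks, and Taylor expansion of $(U_h-1)f/h$. Your channel-decoupling identity (vanishing of the diagonal blocks of $H_j\rho_{\HC}H_k$ for $j\neq k$) together with the bound $\tilde\rho_{\mathbf i}=O(t^{|\mathbf i|})$ is precisely the bookkeeping that justifies the paper's passage from the explicit $K=2$ computation to arbitrary $K$, a step the text leaves implicit.
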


As a consequence we get the following direct extension of Theorem \ref{thBeleqcount}.

\begin{theorem}
\label{thBeleqcountm}
Let $\HC_0=\C^n$ and $A,C_1, \cdots, C_K$ be operators in $\HC_0$ with $A$ being Hermitian.
Let the projectors defining the measurements be chosen to be diagonal in each channel:
\begin{equation}
\label{eqthBeleqcountm}
P_0^j=\begin{pmatrix} 1 & 0 \\  0 & 0 \end{pmatrix},
\quad
P_1^j=\begin{pmatrix} 0 & 0 \\  0 & 1 \end{pmatrix}
\end{equation}
for all $j=1, \cdots, K$.

Then all statements of Theorem \ref{thBeleqcount} hold for the operator
\eqref{eqjumpgener2m} and Markov semigroups described by the transition operator \eqref{eqMarkchain2mult}.
In particular, estimates \eqref{eq1thBeleqcount} and \eqref{eq2thBeleqcount} hold.
\end{theorem}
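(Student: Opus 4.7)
The plan is to follow the blueprint of Theorem \ref{thBeleqcount} verbatim, with the single-channel generator replaced by the multichannel one from \eqref{eqjumpgener2m}. The crucial analytic input, the $O(\sqrt h)$ approximation of $(U_h-1)/h$ by the generator, has already been obtained in Lemma \ref{lemmaongencountmultich}, so the remaining work is essentially (a) to verify well-posedness of the Markov process generated by $L_{count}$ in $S(\HC_0)$, and (b) to feed the approximation bound into the abstract convergence machinery of Proposition \ref{propconvsemigr}.

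For part (i), I would invoke Proposition \ref{propdetjumpdom} exactly as in the single-channel case. The jump component of $L_{count}$ has finitely many bounded jump intensities $\rho\mapsto {\tr}(C_j\rho C_j^*)$ and jump destinations $C_j\rho C_j^*/{\tr}(C_j\rho C_j^*)$ which manifestly lie in $S(\HC_0)$. The drift is
\[
b(\rho)=-i[A,\rho]+\sum_{j=1}^K\left(-\tfrac12\{C_j^*C_j,\rho\}+{\tr}(C_j\rho C_j^*)\rho\right),
\]
which is a finite sum of the single-channel drifts already handled in the proof of Theorem \ref{thBeleqcount}. Since the Brezis-type boundary condition $(v,b(\rho)v)=0$ for every unit $v\in V(\rho)$ is linear in the sum, it is inherited from the $K=1$ case: for each $j$, both $\frac12(v,\{C_j^*C_j,\rho\}v)$ and ${\tr}(C_j\rho C_j^*)(v,\rho v)$ vanish when $\rho v=0$, and $(v,[A,\rho]v)=0$ as in the single-channel argument. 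Trace and Hermitian-preservation are immediate from the explicit form of $b$ and the jump maps. Hence Proposition \ref{propdetjumpdom} delivers a Feller process on $S(\HC_0)$ whose semigroup is bounded on $C^1$ and $C^2$ uniformly over bounded time intervals, with the same cores as in Theorem \ref{thBeleqcount}.

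For parts (ii) and (iii), Lemma \ref{lemmaongencountmultich} gives
\[
\left\|\frac{U_h-1}{h}f-L_{count}f\right\|\le \sqrt h\,\ka\,\|f\|_{C^2(S(\HC_0))},
\]
with a constant $\ka$ that now depends on $n$ and on $\|A\|,\|C_1\|,\dots,\|C_K\|$. Plugging this into Proposition \ref{propconvsemigr} (ii), with the triple $C^2(S(\HC_0))\subset C^1(S(\HC_0))\subset C(S(\HC_0))$ and using the invariance of the cores established in (i), yields the discrete-time convergence \eqref{eq1thBeleqcount}. Part (iii) follows by the same argument used in Theorem \ref{thBeleqcount}: formula \eqref{eqMarkchain3} expresses $L^{\la}$ in terms of $U_{\la}$, the same $\sqrt{\la}$-bound on $\|L^{\la}-L_{count}\|_{C^2\to C}$ is obtained, and Proposition \ref{propconvsemigr} (i) with $B=C(S(\HC_0))$, $D=C^2(S(\HC_0))$ produces \eqref{eq2thBeleqcount}.

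The only real obstacle is bookkeeping: since the combinatorics of the $K$-fold tensor product was worked out only explicitly for $K=2$, one must verify that the general structure $[H,\rho_{\HC}]$, $H\rho_{\HC}H$, $\{H^2,\rho_{\HC}\}$ restricted to the diagonal blocks $(jk)\mapsto(jk)$ continues to produce exactly the single-channel contributions channel by channel, with no cross terms surviving in the relevant blocks. This is immediate from the Kronecker-delta computations of Section \ref{secseveralchan}: off-diagonal interactions $H_iH_j$ with $i\ne j$ contribute blocks that do not match the diagonal pattern $\delta^m_j\delta^l_k$ imposed by the diagonal projectors \eqref{eqthBeleqcountm}, so they drop out of $\tilde\rho_{jk}$ exactly as in the $K=2$ calculation. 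Once this is observed, the generator decomposes as an independent sum over channels, Lemma \ref{lemmaongencountmultich} applies, and the rest of the proof is a transcription of Theorem \ref{thBeleqcount}.
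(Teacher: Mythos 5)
Your proposal is correct and follows essentially the same route as the paper, which presents this theorem as a direct consequence of Lemma \ref{lemmaongencountmultich} combined with a verbatim repetition of the proof of Theorem \ref{thBeleqcount}. The points you spell out explicitly --- that the Brezis boundary condition is inherited channel by channel because the multichannel drift is a sum of single-channel drifts, and that the cross terms $H_iH_j$ vanish on the diagonal blocks selected by the projectors \eqref{eqthBeleqcountm} --- are exactly the details the paper leaves implicit.
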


\begin{remark} As explained in Remark \ref{remoninfindim} this result extends automatically to the case
of arbitrary separable Hilbert space $\HC$ and bounded operators  $A,C_1, \cdots, C_K$ in it.
\end{remark}

As in the case of a single channel, the process generated by \eqref{eqjumpgener2m}
can be described by the solutions to the SDE of jump type, which takes now the form
\begin{equation}
\label{eqBeleqcountm}
d\rho=- i[A, \rho]\, dt +\sum_j (-\frac12 \{C^*_jC_j,\rho\}+ {\tr} (C_j\rho C^*_j) \rho ) \, dt
+\sum_j\left(\frac{C_j\rho C^*_j}{{\tr} (C_j\rho C^*_j)}-\rho\right) dN^j_t,
\end{equation}
with the counting processes $N_t^j$ are independent and have the position dependent intensities ${\tr} (C^*_jC_j\rho)$.
Equation \eqref{eqBeleqcountm} is the {\it Belavkin quantum filtering SDE} corresponding
to the {\it counting type observation via several channels}.

As suggested by Theorem \ref{thBeleqdif}, exploiting non diagonal pairs of projectors $P_0^j, P_1^j$
should lead to the limiting generator of diffusive type. In fact, performing similar calculations
(which we omit) one arrives at the following general result.

\begin{theorem}
\label{thBeleqmixm}
Let $\HC_0=\C^n$ and $A,C_1, \cdots, C_K$ be operators in $\HC_0$ with $A$ being Hermitian.
Let the projectors defining the measurements are chosen to be diagonal, that is of type
\eqref{eqthBeleqcountm}, for a subset $I\subset \{1, \cdots,K\}$
of the set of channels. And for $j\notin I$ these channels are chosen as non-diagonal,
that is of the form
\begin{equation}
\label{eq1thBeleqmixtm}
P_0^j=\begin{pmatrix} \cos^2 \phi_j & \sin\phi_j \cos \phi_j \\  \sin\phi_j \cos \phi_j  & \sin^2\phi \end{pmatrix},
\quad
P_1^j=\begin{pmatrix} \sin^2 \phi_j & -\sin\phi_j \cos \phi_j \\  -\sin\phi_j \cos \phi_j & \cos^2\phi_j \end{pmatrix},
\end{equation}
with $\phi_j\neq k\pi/2$, $k\in \N$.
Then the limiting generator for the semigroup with the transition operator  \eqref{eqMarkchain2mult}
gets the expression
\[
L_{mix}f(\rho)=\sum_{j\in I}  \left( f'(\rho), -\frac12 \{C_j^*C_j, \rho\}+ {\tr}(C_j \rho C_j^*) \rho\right)
+ \sum_{j\in I} \,{\tr}\, (C_j\rho C_j^*)\left[ f\left(\frac {C_j\rho C_j^*} {{\tr} (C_j\rho C_j^*)}\right) -f(\rho)\right]
\]
\[
+\frac12 \sum_{j\notin I}[(\rho C_j^* + C_j\rho- {\tr} (\rho C^*_j + C_j\rho) \rho)f''(\rho)
(\rho C_j^* + C_j\rho- {\tr} (\rho C^*_j + C_j\rho) \rho)]
\]
\begin{equation}
\label{eq2thBeleqmixtm}
+\sum_{j\notin I} \left(f'(\rho), -\frac12 \{C^*_jC_j,\rho\}+ C_j\rho C_j^* \right)
-(f'(\rho), i[A, \rho]).
\end{equation}
This operator generates
a Feller process $O_t^{\rho}$ in $S(\HC_0)$ and
the corresponding Feller semigroup $T_t$ in $C(S(\HC_0))$ such that claims
 (ii) and (iii) of Theorem \ref{thBeleqdif} hold.
 The Markov process generated by \eqref{eq2thBeleqmixtm} can be given by
 the solutions of the following SDEs in $S(\HC_0)$:
 \[
 d\rho=-i[A, \rho] \, dt
+ \sum_{j\in I} (-\frac12 \{C^*_jC_j,\rho\}+ {\tr} (C_j\rho C^*_j) \rho ) \, dt
+\sum_{j\in I}\left(\frac{C_j\rho C^*_j}{{\tr} (C_j\rho C^*_j)}-\rho\right) dN^j_t
\]
\begin{equation}
\label{eqBeleqmmix}
+\sum_{j\notin I}(-\frac12 \{C_j^*C_j,\rho\}+ C_j\rho C_j^* ) \, dt
+\sum_{j\notin I}\left(\rho C_j^* + C_j\rho-{\tr}\, (\rho C_j^* + C_j\rho)\rho\right) \, dW^j_t,
\end{equation}
where $W_j$ are independent Wiener processes and $N_t^i$ independent jump process of intensity
${\tr} (C_j\rho C^*_j)$.
\end{theorem}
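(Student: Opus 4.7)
The plan is to extend the block calculations of Section \ref{secseveralchan} from $K=2$ diagonal projectors to general $K$ with mixed projector types, exploiting a channelwise decoupling that emerges in the small-time limit. First, I would expand $\ga_t = e^{-itH}(\rho \otimes \Om_1 \otimes \cdots \otimes \Om_K) e^{itH}$ to order $t$ via \eqref{smalltimegroup} and organise the entries of $\ga_t$ by the block index $(i_1,\ldots,i_K) \in \{0,1\}^K$ on $\HC$. Each interaction piece $H_j$ carries the scaling $1/\sqrt t$ and swaps the vacuum and excited sectors of device $j$; hence a block whose multi-index contains two or more ones appears only at order $t^2$ in $\ga_t$, its probability $p_{i_1\cdots i_K}(t)$ is $O(t^2)$, and it drops out of $(U_h-1)/h$ in the limit. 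Only the all-zero block and the $K$ single-excitation blocks $(0,\ldots,1_j,\ldots,0)$ need be retained.

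Next I would verify that each surviving block localises to a single channel: the all-zero block yields $\rho - it[A,\rho] - \tfrac12 t \sum_j \{C_j^*C_j,\rho\}$, because any cross product $H_jH_k\rho_{\HC}$ with $j\neq k$ either vanishes under the $(0,\ldots,0)$ projection or lands in multi-excitation blocks; the $j$-th single-excitation block yields $t\, C_j\rho C_j^*$, again involving only the $j$-th interaction. Applying the measurement projectors $P^j_{i_j}$ block by block then reproduces, channel by channel, the computations of Sections \ref{seccountcase} and \ref{secdifcase}: for $j \in I$ one recovers the jump generator \eqref{eqjumpgener} with $C$ replaced by $C_j$, while for $j \notin I$ the $h^{-1/2}$ terms cancel as in the $\phi$-independent derivation of \eqref{eqdifgener}, with $C$ replaced by $C_j$ and $\phi$ by $\phi_j$. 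Summing over $j$ produces exactly \eqref{eq2thBeleqmixtm}, and combining the remainder bookkeeping of Lemmas \ref{lemmaongencount} and \ref{lemmaongendif} with the $O(t^2)$ estimate on the discarded multi-excitation blocks yields an operator-norm bound of the form $\|(U_h-1)/h - L_{mix}\| \le \sqrt h\, \ka\, \|f\|_{C^3(S(\HC_0))}$. Claims (ii) and (iii) of the theorem then follow as in Theorems \ref{thBeleqcount} and \ref{thBeleqdif} via Proposition \ref{propconvsemigr}.

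The main obstacle is well-posedness of \eqref{eqBeleqmmix} in $S(\HC_0)$; unit trace and Hermiticity are algebraic, but preservation of positivity is delicate. I would combine the two single-channel arguments: the jump part fits directly into the Brezis framework of Theorem \ref{thBeleqcount}(i), since each post-jump state $C_j\rho C_j^*/{\tr}(C_j\rho C_j^*)$ already lies in $S(\HC_0)$ and the compensating drift is inward-pointing on the boundary. For the diffusive part I would pass to Stratonovich form channel by channel as in the proof of Theorem \ref{thBeleqdif}; because the $W^j_t$ are independent, no cross-channel Ito-to-Stratonovich corrections arise, and within each channel the ``nasty'' $C_j\rho C_j^*$ term cancels against its own Ito correction, so the Wong-Zakai ODE approximations satisfy $(v,\dot\rho\, v)=0$ whenever $(v,\rho v)=0$. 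Theorem \ref{thBrezis} then closes the positivity argument for the combined flow, and uniqueness together with smooth dependence on initial data follow automatically from the boundedness and smoothness of the (suitably extended) coefficients, exactly as in Theorem \ref{thBeleqdif}.
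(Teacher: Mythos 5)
Your derivation of the limiting generator follows the route the paper intends: the block expansion of $\ga_t$ over multi-indices in $\{0,1\}^K$, the observation that multi-excitation blocks carry probability $o(t)$ after the $1/\sqrt t$ scaling, and the channel-by-channel reduction to the single-channel computations of Sections \ref{seccountcase} and \ref{secdifcase}. You actually supply more detail than the paper, which omits this computation entirely; the one point you should make explicit is the vanishing of the cross-channel second-order terms $[B_jf''(\rho)B_k]$, $j\neq k$ both non-diagonal, which requires checking that $\sum_i q_i s_i=0$ with $q_0=\cos^2\phi_j$, $q_1=\sin^2\phi_j$, $s_0=\tan\phi_j$, $s_1=-\cot\phi_j$ (it does, since $\cos^2\phi\tan\phi-\sin^2\phi\cot\phi=0$); this is what makes the driving noises in \eqref{eqBeleqmmix} independent rather than correlated. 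Where you genuinely diverge from the paper is the well-posedness of $L_{mix}$: you run a single combined positivity argument on the full SDE, treating the jump-compensating drift via Brezis (it is tangent to the boundary, since $(v,\{C_j^*C_j,\rho\}v)=0$ whenever $\rho v=0$) and the diffusive part via the channelwise Stratonovich/Wong--Zakai reduction. The paper instead splits the generator and invokes either bounded-perturbation theory (the jump part is a bounded, positivity- and smoothness-preserving operator added to the diffusion generator) or the Lie--Trotter formula from \cite{Kolbook11}. Both are valid; the paper's splitting avoids having to justify Wong--Zakai approximation for an SDE with interlaced jumps, while your direct argument keeps the probabilistic picture of a single trajectory and makes the boundary behaviour transparent. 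Either way the convergence claims (ii) and (iii) then follow from Proposition \ref{propconvsemigr} exactly as you state.
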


\begin{proof} In the pure diffusive case, that is with empty $I$, the proof is exactly the
same as in Theorem \ref{thBeleqdif}. For the general case one only has to show
that operator $L_{mix}$ generates a Feller process in $S(\HC_0)$ preserving the sets of smooth functions
(other arguments are again the same).
Two proofs for proving this fact can be suggested.
(i) One  starts with generator $\tilde L_{mix}$ obtained from \eqref{eq2thBeleqmixtm}
by ignoring the jump part. This is a well-defined diffusion operator and by the same methods as
in  Theorem \ref{thBeleqdif} one shows that it generates a Feller processes in $S(\HC_0)$.
But the jump part of \eqref{eq2thBeleqmixtm} is a bounded operator preserving positivity and smoothness.
Hence it can be dealt with straightforwardly via the perturbation theory.
(ii) Each of the two parts of \eqref{eq2thBeleqmixtm}, related to $I$ and its complement,
generates a well-defined Feller process in $S(\HC_0)$ preserving smoothness (of arbitrary order in fact).
Hence one can derive that the sum of these operators generates  a well-defined Feller process in $S(\HC_0)$
via the Lie-Trotter formula, namely from Theorem 5.3.1 of \cite{Kolbook11}.
\end{proof}

\begin{remark}
The Markov chain of multichannel measurement that we are using is a bit different from the one used in
\cite{Pellegrini10a}, where measurement is based on a single operator $R$ in the device (no different channels),
and counting and diffusive parts of the generator arise from different projectors linked to different eigenspaces
of this operator. As was already mentioned the method of \cite{Pellegrini10a}
did not provide the rates of convergence.
\end{remark}

When $I$ is empty, $L_{mix}$ turns to $L_{dif}$ describing the multichannel observations of diffusive type.

\section{Fractional quantum stochastic filtering}
\label{secfraceq}

Now everything is ready for our main result: the derivation of the fractional equations of quantum stochastic filtering.
As was shown above the standard Belavkin equations of quantum filtering can be obtained as the scaled limits of
the sequences of discrete observations. The main assumption for each of the approximating processes was that the time
between successive measurement is either constant (discrete Markov chain approximation) or is exponentially distributed
(continuous time Markov chain approximation). Of course there is no a priori reasons for these assumptions.
 And in fact in several domains of physics it turned out to be more appropriate to model times between successive
 events by random variables from the domains of attraction of a stable law, that is via CTRW.

Our next result is a direct consequence of Theorem  \ref{thBeleqmixm} and Proposition \ref{propCTRW}.

 \begin{theorem}
 \label{mainth}
 Under the assumptions of Theorem \ref{thBeleqmixm} let the Markov chain \eqref{eqMarkchainmult}
is modified in such a way that the laws of transitions $\rho \to \rho_t^{i_1 \cdots i_K}$
remain unchanged, by the time between transitions is taken as scaled random variable from the domain
of attraction of a $\be$-stable law, that is as $T_i^h=h^{1/\be} T_i$  from Proposition \ref{propCTRW}.
Then the corresponding generalized CTRW processes \eqref{ctrwposdep}  built from the transition operator
\eqref{eqMarkchain2mult} converge to the process $O^{\rho}_{\si_t}$ obtained from the process
$O^{\rho}_t$  of Theorem \ref{thBeleqmixm} via subordination by the inverse stable process
$\si_t=\max \{y: S_y \le t\}$. Moreover, the functions $f_t(x)=\E (T_{\si_t}f)(x)$
satisfy the fractional Caputo-Djerbashian equation
\eqref{eqfraceq} with the generator $L=L_{mix}$ given by \eqref{eq2thBeleqmixtm}.
 \end{theorem}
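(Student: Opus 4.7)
The plan is to view Theorem \ref{mainth} as a near-immediate corollary of two ingredients already on the table: the finite-approximation result of Theorem \ref{thBeleqmixm}, which provides the convergence of the (deterministically or exponentially spaced) sequence of indirect measurements to the Feller process generated by $L_{mix}$, and the general CTRW scaling principle of Proposition \ref{propCTRW}, which converts such a Markov chain convergence into a fractional subordination statement once the waiting times are replaced by the scaled i.i.d.\ variables $h^{1/\be}T_i$ with $T_i$ in the domain of attraction of a $\be$-stable law.

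First I would recall from Theorem \ref{thBeleqmixm} that the scaled discrete semigroups $(U_h)^{[s/h]}$ associated with \eqref{eqMarkchain2mult} converge on $C(S(\HC_0))$ to the Feller semigroup $T_s$ generated by \eqref{eq2thBeleqmixtm}, with the explicit quantitative bound of the type \eqref{eq1thBeleqcount}, and that $T_s$ is bounded on the invariant cores $C^2(S(\HC_0))$ and $C^3(S(\HC_0))$ uniformly on bounded time intervals. This is exactly the hypothesis that Proposition \ref{propCTRW} takes as input, so no additional tightness or compactness argument is required; the explicit $\sqrt h$ rate already provides more than enough slack to absorb the error coming from the time-rescaling in the CTRW construction.

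Second I would apply Proposition \ref{propCTRW} directly. The point is that in passing from the Markov chain \eqref{eqMarkchainmult} to its CTRW modification, only the inter-measurement times are altered, while the one-step transition law $\rho \to \rho_t^{i_1\cdots i_K}$ is preserved verbatim. Hence the position-dependent generalized CTRW \eqref{ctrwposdep} built from the transition operator \eqref{eqMarkchain2mult} satisfies the hypotheses of the proposition. Its conclusion then yields two things simultaneously: the process-level convergence to the time-changed process $O^{\rho}_{\si_t}$, with $\si_t$ the inverse $\be$-stable subordinator associated with $S_y$; and the analytic statement that $f_t(x)=\E(T_{\si_t}f)(x)$ solves the Caputo-Djerbashian equation \eqref{eqfraceq} with $L=L_{mix}$. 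The latter reflects the standard semigroup/subordination correspondence, where the Laplace transform in $t$ of $f_t$ is related to the resolvent of $L$ through the Laplace exponent of the $\be$-stable subordinator.

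There is no substantive obstacle to this argument; the only task is essentially bookkeeping, namely matching the regularity class $C^3(S(\HC_0))$ on which Theorem \ref{thBeleqmixm} gives rates against the class required by Proposition \ref{propCTRW}, and verifying that the subordination identity and the fractional equation \eqref{eqfraceq} apply to $f\in C^3(S(\HC_0))$ as a consequence of the invariance of this core under $T_s$. The real work, namely the derivation of the generator $L_{mix}$ together with its well-posedness as a Feller generator, has already been done in Sections \ref{seccountcase}--\ref{secseveralchan}; Proposition \ref{propCTRW} supplies the purely probabilistic final step.
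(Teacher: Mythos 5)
Your proposal is correct and follows exactly the route the paper intends: the paper gives no separate proof of Theorem \ref{mainth}, stating only that it is "a direct consequence of Theorem \ref{thBeleqmixm} and Proposition \ref{propCTRW}," and your argument simply spells out that deduction — Theorem \ref{thBeleqmixm} supplies the convergence hypothesis \eqref{eq2propconvsemigr} for the transition operators, Proposition \ref{propCTRW} converts it into the subordinated limit, and the Caputo-Djerbashian equation follows from the standard subordination identity recalled at the end of Appendix C.
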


 As noted at the end of Appendix C the fractional derivative $D^{\be}_{0+\star}$ is a particular case of a class
 of mixed fractional derivatives \eqref{eqfraceq}. Therefore, under appropriately organised scaled times
 between the acts of measurements the limiting evolution will satisfy a more general fractional equation
 \begin{equation}
\label{eqfraceqquBe}
 D^{(\nu)}_{0+\star}f_t(x)=L_{mix}f_t(x), \quad f_0(x)=f(x),
\end{equation}
with $D^{\nu}$ given by \eqref{eqfraceq2}.

When only one type of observation channels is used, equation \eqref{eqfraceqquBe} simplifies to the case,
when either $L_{count}$ or $L_{dif}$ are places instead of $L_{mix}$.

Equations \eqref{eqfraceqquBe} (and their particular cases with fractional derivative $D^{\be}$ of order $\be$)
represent the fractional analogs of the process of quantum stochastic filtering. These equations can be
also considered as the new equations of fractional quantum mechanics. They are different from the fractional
Schr\"odinger equations suggested in  \cite{Lask02} and extensively studied recently.

Equations \eqref{eqfraceqquBe} describe the process of continuous quantum control and filtering on the level
of the evolution of averages. On the 'micro-level' of SDEs  \eqref{eqBeleqmmix} these equations correspond
to stopping the solutions of these SDEs at a random time $\si_t$ given by the inverse of a L\'evy subordinator.

\section{Fractional quantum control and games}
\label{secfraceqHJB}

The theory of quantum filtering reduces the analysis of quantum dynamic control and games
to the controlled version of evolutions \eqref{eqBeleqmmix}. The simplest situation concerns the case
 when the homodyne device is fixed, that is the operators $C_j$ and the projectors $P_i^j$ are fixed,
 and the players can control the individual Hamiltonian $H_0$ of the atom, say,
 by applying appropriate electric or magnetic fields to the atom. Thus equations
 \eqref{eqBeleqmmix} become modified by allowing $H_0$ to depend on one or several control parameters.
 The so-called separation principle states
 (see \cite{BoutHanQuantumSepar}) that the effective control of an observed quantum system
 (that can be based in principle on the whole history of the interaction of the atom and optical devices)
 can be reduced to the Markovian feedback control of the quantum filtering equation, with the feedback
 at each moment depending only on the current (filtered) state of the atom.

In the present case of CTRW modeling of the process of measurements the problem of control becomes the
problem of control of scaled CTRW. The theory of such control was built in the series of papers
\cite{KolVer14} - \cite{KolVer17}. The main result is that in the scaling limit the cost functions
is a solution of the fractional Hamilton-Jacobi equation. In the present context and in game-theoretic setting
it implies the following.
Let us consider the controlled version of the process  $O^{\rho}_{\si_t}$  from Theorem \ref{mainth},
where the individual Hamiltonian is now $\tilde H_0=H_0+uH_0^1+v H_0^2$ and it depends on control parameters
 $u,v$ of two players from compact sets $U$ and $V$ respectively. Suppose that it is possible
to choose new $u,v$ directly after each act of measurement, and thus a control strategy is the sequence
$(u_1, v_1), (u_2,v_2), \cdots )$ of controls applied after each act of measurement, with each $(u_j,v_j)$
applied after $j$th act of measurement and depending on the history of the process until that time.
The case of a pure control (not a game) corresponds to the choice $V=0$ and is thus automatically included.
Assume that players $I$ and $II$ play a standard dynamic zero-sum game with a finite time horizon $T$
meaning that the objective of $I$ is to maximize the payoff
 \begin{equation}
\label{eqcostfun}
P(t; u(.), v(.)) =\E [\int_t^T   {\tr} \, (J \rho_s) \, ds +{\tr} \, (F \rho_T)],
\end{equation}
where $J$ and $F$ are some operators expressing the current and the terminal costs of the game
(they may depend on $u$ and $v$, but we exclude this case just for simplicity) and $W$ is the collection
of all noises involved in \eqref{eqBeleqmmix} (both diffusive and Poisson).
Then under the scaling limit of Theorem \ref{mainth} the optimal cost function
\begin{equation}
\label{eqcostfunopt}
S_t(\rho) =\max_{u(.)} \min_{v(.)}P(t; u(.), v(.))=\min_{v(.)} \max_{u(.)} P(t; u(.), v(.))
\end{equation}
will satisfy the following {\it fractional HJB-Isaacs equation of the CTRW modeling of quantum games}:
 \begin{equation}
\label{eqHJB}
D^{\nu}_{0+\star}S_t(\rho)=\max_u (f'(\rho), i[\rho, uH_0^1]) +\min_v (f'(\rho), i[\rho, vH_0^2])
+{\tr} \, (J \rho_t)+ L_{mix}S_t(\rho).
\end{equation}

In \cite{KolVer14} this equation was derived heuristically, in the general framework of controlled
CTRW by the dynamic programming approach. As usual in optimal control theory, to justify the derivation
one has to show the well-posedness of the limiting HJB equation
and then to prove the verification theorem, a classical reference is \cite{FlemSon}. For some cases
of CTRWs  this was performed in \cite{KolVer17}.

In the present fractional quantum case this problem will be considered elsewhere.
The additional complexity of this equation is related to the fact that the state space is a
rather nontrivial set of positive matrices with the unit trace. One can reduce the complexity
by looking at the dynamics of pure states only. But the set of pure states is not a Euclidean space,
but a manifold. In the finite-dimensional setting this manifold is the complex projective space $\C P^n$.

Let us mention that in the non-fractional case, that is with the usual derivative $\pa/\pa t$ instead of
$D^{\nu}_{0+\star}$ in \eqref{eqHJB}, the well-posedness of (the analogs of) equation \eqref{eqHJB} was proved
in \cite{Kol92}, for a special model of pumping a laser with a counting measurement, with some particular
solutions calculated explicitly,
and in \cite{KolDynQuGames}, for a special arrangements of diffusive measuring devises that ensured that the
diffusive part of operator $L_{dif}$ was nondegenerate and therefore the optimal control problem was reduced
to the drift control of the diffusions on a Riemannian manifold  $\C P^n$.

\section{Other Markov approximations and unbounded generators}
\label{secunbound}

We commented above on the possible extension to infinite-dimensional Hilbert spaces. However, for all approximations
the assumption of boundedness of all operators involved seemed to be essential in the derivation given,
at least of the coupling operators $C_j$ (unboundedness of $A$ can be possibly treated via the interaction representation).
However, the quantum filtering equations are used also in the standard setting of quantum mechanics. The mostly studied
case is that of the standard Hamiltonian $H=-\De+V(x)$ in $L^2(\R^d)$ and the coupling operators being either position
(multiplication by $x$) or momentum operators. Different Markov chain approximations may be used to derive the
filtering equation in this case.

A powerful approach was suggested by Belavkin in \cite{Be195}: to use the von Neumann model
of unsharp measurement. In this model the effect of measurement for the product state $\phi(x)f(y)$ of an atom and
a measuring device, a pointer, is given by the shift
\[
U: \phi (x) f(y) \mapsto \phi(x) f(y-ax).
\]
Here both $\phi$ and $f$ are from $L^2(\R^d)$, and $f>0$ describes the stationary state of a pointer
(the analog of the vacuum state in our modeling above).
Projecting on the state of an atom this yields the transition
\begin{equation}
\label{ea0}
G(y) \colon \quad \phi(x) \mapsto \phi_y(x)=\phi(x) f(y-ax)/f(y),
\end{equation}
depending on the observed position $y$ of the pointer. Assuming the evolution of the atom
during time $t$ between the moments of measurements to be given by a Hamiltonian $A$,
the transition of a Markov chain of sequential measurements become
\begin{equation}
\label{ea1}
\phi \mapsto \phi_{t,y}(x)=(e^{-iAt}\phi)(x) f(y-ax)/f(y).
\end{equation}
After an appropriate scaling from this Markov chain one derives the diffusive filtering SDE
\eqref{eqqufiBnonlins} with $C=x$ (the multiplication operator), that is directly the filtering equation
for pure states, see detail in Appendix to \cite{BelKol}.
The model can be extended to more general situations, but seems to be linked with a specific von Neumann
instantaneous interaction. For the well-posedness of these kind of diffusive SDEs we can refer to
\cite{Holevo91}, \cite{FagMor} and references therein.

The derivation of the fractional version of this equation, as well as the fractional
control of Section  \ref{secfraceqHJB} can be performed in this setting in the same way as above.

\section{Appendix A. Convergence of semigroups}
\label{secconbergsem}

Here we collect the results on the convergence of Markov semigroups and CTRW, which form the
the theoretical basis for our derivations of the filtering equations.

It is well known that the convergence of the generators on the core of the limiting generator
implies the convergence of semigroups. We shall use a version of this result with the rates,
namely the following result, given in Theorem 8.1.1 of \cite{Kolbook11}.

\begin{prop}
\label{propconvsemigr}

 Let $F_t=e^{tL}$ be a strongly continuous semigroup in a Banach space $B$ with a norm $\|.\|_B$,
 generate by an operator $L$,
having a core $D$, which is itself a Banach space with a norm $\|.\|_D\ge \|.\|_B$ so that $L\in \LC(D,B)$.
Let $F_t$ be also a bounded semigroup in $D$
such that $\|F_t\|_{D\to D} \le C_D(T)$ with a constant $C_D(T)$ uniformly for $t\in [0,T]$.

(i) Let $F_t^h$, $h>0$,  be a family of strongly continuous contraction semigroups in a Banach space $B$
with bounded generators $L_h$ such that
\[
\|L_hf-Lf\|_B \le \ep_h \|f\|_D
\]
for all $f\in D$ and some $\ep_h$ such that $\ep_h\to 0$ as $h\to 0$.
 Then the semigroups $F_t^h$ converge strongly to the semigroup $F_t$, as $h\to 0$, and
\begin{equation}
\label{eq1propconvsemigr}
\|F_t^hf -F_tf\|_B \le t \ep_h C_D(T)\|L\|_{D\to B}.
\end{equation}

(ii) Let $U_h$ be a family of contractions in $B$ such that
\begin{equation}
\label{eq2propconvsemigr}
\|\left(\frac{U_h-1}{h} -L\right)f\|_B \le \ep_h \|f\|_D,
\end{equation}
and
 \begin{equation}
\label{eq3propconvsemigr}
\|\left(\frac{F_h-1}{h} -L\right)f\|_B \le \ka_h \|f\|_D,
\end{equation}
with $\ep_h \to 0$ and $\ka_h\to 0$, as $h\to 0$.
Then the scaled discrete semigroups $(U_h)^{[t/h]}$ converge to the semigroup $F_t$
and moreover
 \begin{equation}
\label{eq4propconvsemigr}
\sup_{s\le t}\|(U_h)^{[s/h]} -F_sf\|_B \le (\ka_h+\ep_h)t \|f\|_B.
\end{equation}
\end{prop}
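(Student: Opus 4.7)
The plan is to use a quantitative Trotter--Kato argument: a Duhamel identity for part (i) and a discrete telescoping identity for part (ii), both relying crucially on the assumed invariance of the core $D$ under $F_t$ with the bound $\|F_s\|_{D\to D}\le C_D(T)$.

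For part (i), fix $f\in D$ and consider the map $s\mapsto F_{t-s}^h F_s f$ on $[0,t]$. Since $L_h$ is bounded, $F_t^h$ is norm-differentiable with derivative $L_h F_t^h$, and since $F_s$ leaves $D$ invariant with $L\in\LC(D,B)$, the map $s\mapsto F_s f$ is $B$-differentiable with derivative $L F_s f$. The product rule then gives
\[
\frac{d}{ds}\bigl(F_{t-s}^h F_s f\bigr)=F_{t-s}^h (L-L_h) F_s f,
\]
and integrating over $[0,t]$ yields the Duhamel identity
\[
F_t f - F_t^h f=\int_0^t F_{t-s}^h (L-L_h) F_s f\,ds.
\]
The contractivity of $F_{t-s}^h$ on $B$, combined with the hypothesis $\|(L-L_h)g\|_B\le\epsilon_h\|g\|_D$ and the invariance bound $\|F_s f\|_D\le C_D(T)\|f\|_D$ for $s\in[0,t]$, gives the asserted rate on the dense set $D$; strong convergence on all of $B$ then follows from the uniform contractivity of $F_t^h$ and the density of the core.

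For part (ii), I would exploit the discrete telescoping identity
\[
(U_h)^n f - F_{nh} f=\sum_{k=0}^{n-1}(U_h)^{n-1-k}(U_h-F_h)\,F_{kh} f.
\]
To bound each increment, write
\[
U_h-F_h=h\Bigl[\tfrac{U_h-1}{h}-L\Bigr]-h\Bigl[\tfrac{F_h-1}{h}-L\Bigr],
\]
so that hypotheses \eqref{eq2propconvsemigr} and \eqref{eq3propconvsemigr} give $\|(U_h-F_h)g\|_B\le h(\epsilon_h+\kappa_h)\|g\|_D$ for $g\in D$. Substituting $g=F_{kh}f$, using $\|F_{kh}f\|_D\le C_D(T)\|f\|_D$, and exploiting that each $(U_h)^{n-1-k}$ is a contraction on $B$, the telescoping sum is bounded by $nh(\epsilon_h+\kappa_h)C_D(T)\|f\|_D\le t(\epsilon_h+\kappa_h)C_D(T)\|f\|_D$. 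To pass from $n=[s/h]$ to a supremum over $s\le t$, I would add the elementary side estimate $\|F_s f - F_{nh}f\|_B\le h\|L\|_{D\to B}\|f\|_D$, a consequence of $\|F_\tau g-g\|_B\le\tau\|Lg\|_B$ on $D$; this correction is of smaller order than the main term.

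The only genuine subtlety is justifying the $B$-norm differentiation of $s\mapsto F_{t-s}^h F_s f$: one needs both the $D$-invariance of $F_s$ and the $B$-continuity of $s\mapsto L F_s f$, which together with $L\in\LC(D,B)$ make the product rule rigorous on $D$. Everything else reduces to bookkeeping — once the Duhamel and telescoping identities are set up, the rates advertised in the statement drop out by simply grouping the constants $\epsilon_h$, $\kappa_h$, and $C_D(T)$.
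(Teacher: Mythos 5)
Your argument is correct and is the standard quantitative Trotter--Kato proof (Duhamel identity for the continuous case, telescoping sum for the discrete case); the paper itself gives no proof of this proposition but simply cites Theorem 8.1.1 of the author's monograph, where essentially the same argument is used. Note only that your derivation naturally produces the bounds $t\,\ep_h C_D(T)\|f\|_D$ and $t(\ep_h+\ka_h)C_D(T)\|f\|_D$, which shows that the right-hand sides of \eqref{eq1propconvsemigr} and \eqref{eq4propconvsemigr} as printed (with $\|L\|_{D\to B}$ in place of $\|f\|_D$, respectively with $\|f\|_B$ and without $C_D(T)$) contain typos rather than any defect in your reasoning.
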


Additional condition \eqref{eq3propconvsemigr} makes working with discrete approximation a bit more subtle,
than with the continuous chain approximations. Effectively to get \eqref{eq3propconvsemigr}
one needs a deeper regularity. Namely one should have another core $\tilde D$ such that $D\subset \tilde D\subset B$
with $L\in \LC(D,\tilde D) \cap \LC(\tilde D,B)$. In this case it is easy to see that
\begin{equation}
\label{eq5propconvsemigr}
\|\left(\frac{F_h-1}{h} -L\right)f\|_B \le h \|L\|_{D,\tilde D} \|L\|_{\tilde D,B}\|f\|_D.
\end{equation}

\section{Appendix B. Deterministic motions with random jumps}
\label{secdetandjump}

Let us look at the Cauchy problem
 \begin{equation}
\label{eqdetjump}
\frac{\pa f_t}{\pa t} =(\nabla f_t, b(x))+Lf_t(x), \quad f_0(x) \, \text{given},
\end{equation}
with the simplest jump-type operator
\[
L_f(x)=\sum_{j=1}^J f(Y_j(x)-x),
\]
where $x\in \R^d$, $\nabla f=\pa f/\pa x$ and $b,Y_j:\R^d\to \R^d$ are given bounded smooth functions.
It is more or less obvious that the resolving operators of the Cauchy problem \eqref{eqdetjump}
form a semigroup of contractions in the space $C(\R^d)$ preserving the spaces of
smooth functions. Let us make a precise statement. The simplest way to see it is via the 'interaction
representation'. Namely, let $X_t(x)$ denote the solution to the Cauchy problem $\dot X_t(x)=b(X_t(x))$,
$X_0(x)=x$, and let us change the unknown function $f$ in \eqref{eqdetjump} to $\phi$ via the equation
$f(x)=\phi(X_t(x))$. Direct substitution shows that $\phi$ solves the Cauchy problem
 \begin{equation}
\label{eqdetjump1}
\frac{\pa \phi_t}{\pa t} =L_t\phi_t(x)=\sum_{j=1}^J \phi((X_t(Y_j(X_{-t}(x))))-x), \quad \phi_0=f_0.
\end{equation}
Since $L_t$ is a bounded operator, this Cauchy problem can be solved by the convergence series
 over the powers of $L_t$. This leads to the following result.

\begin{prop}
\label{propdetjump}
Let $b, Y_j\in C^2(\R^d)$, $j=1, \cdots, J$. Then the resolving operators $R_t$ of the Cauchy problem
\eqref{eqdetjump} form a semigroups of contractions in $C(\R^d)$ such that the spaces $C^1(\R^d)$ and $C^2(\R^d)$
are invariant and $R_t$ form semigroups of operators in these spaces that are uniformly bounded for $\in [0,T]$ with any $T$.
\end{prop}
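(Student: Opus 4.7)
The plan is to execute the interaction representation exactly as suggested in the hint, and then quantify the required function-space bounds. First, since $b\in C^2(\R^d)$ is bounded and Lipschitz, the ODE $\dot X_t(x)=b(X_t(x))$, $X_0(x)=x$, defines a global $C^2$-flow with the maps $X_t, X_{-t}$ and their first and second derivatives in $x$ bounded uniformly on $t\in[0,T]$ (standard Gronwall estimates applied to the variational equations). Setting $\phi_t(x)=f_t(X_{-t}(x))$ converts \eqref{eqdetjump} into $\partial_t\phi_t=L_t\phi_t$ with $L_t$ given in \eqref{eqdetjump1}; the transport term cancels by the chain rule, and $L_t$ is a finite sum of composition operators with the bounded $C^2$-maps $Z_j(t,\cdot)=X_t\circ Y_j\circ X_{-t}$.

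Second, because $L_t$ is a bounded operator on $C(\R^d)$ with $\|L_t\|\le 2J$ uniformly in $t\in[0,T]$, the Cauchy problem for $\phi_t$ is solved by the time-ordered Dyson series
\[
\phi_t=\sum_{n=0}^{\infty}\int_{0\le s_1\le\cdots\le s_n\le t}L_{s_n}\cdots L_{s_1}f_0\, ds_1\cdots ds_n,
\]
which converges in operator norm on $C(\R^d)$, defines a two-parameter family $U_{t,s}$, and gives $R_tf_0=\phi_t\circ X_t$. The semigroup property $R_{t+s}=R_tR_s$ follows from the autonomy of both the flow $X_t$ and the original operator on the right of \eqref{eqdetjump}. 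For the contraction property one writes $L_tg(x)=\sum_j g(Z_j(t,x))-Jg(x)$ (adopting the correct jump-generator interpretation of $L$), and uses that $\|e^{-tJ}\sum_{n\ge 0}t^n\|\sum_j\cdot\|^n/n!\le e^{-tJ}e^{tJ}=1$ in the sup norm; equivalently, $U_{t,s}$ is a Markov evolution, so $\|R_t\|_{C\to C}\le 1$.

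Third, for the invariance of $C^1$ and $C^2$, the key is that each composition operator $g\mapsto g\circ Z_j(t,\cdot)$ maps $C^k(\R^d)$ into itself for $k=1,2$ with norm bounded by a polynomial in the $C^k$-norms of $Z_j(t,\cdot)$ and $X_{\pm t}$, uniformly on $t\in[0,T]$. Hence $L_t$ is bounded as an operator on $C^k(\R^d)$, the Dyson series also converges there, and the resulting bound on $R_t$ in $C^k$ is of the form $e^{C(T)t}$ rather than a contraction. The main technical obstacle is therefore not existence or the semigroup property, but the uniform-in-$t$ control of the $C^2$-norms of the compositions $X_t\circ Y_j\circ X_{-t}$: this requires the chain rule together with Gronwall bounds on the first and second spatial derivatives of the flow $X_t$, which are standard but must be carried out to ensure the constant $C(T)$ depends only on $T$ and on the $C^2$-norms of $b,Y_j$. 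Once this is established, the uniform boundedness of $R_t$ on $C^1(\R^d)$ and $C^2(\R^d)$ for $t\in[0,T]$ follows immediately from the series representation.
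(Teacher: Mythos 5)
Your proposal is correct and follows essentially the same route as the paper: the interaction representation along the flow of $b$, reducing the problem to a non-autonomous equation with the bounded jump operator $L_t$, solved by the convergent (Dyson) series, with the $C^1$ and $C^2$ invariance coming from uniform-in-$t$ bounds on the composed maps $X_t\circ Y_j\circ X_{-t}$. You correctly read the (typographically garbled) operator $L$ as the jump generator $\sum_j\bigl[f(Y_j(x))-f(x)\bigr]$, and your write-up supplies the Gronwall and composition-operator details that the paper leaves implicit.
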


We need an extension of this result for the subsets of $\R^d$. The main tool is the following classical theorem of Brezis,
which we formulate in its simplest form referring to proofs, extensions and history to \cite{Redheffer}.

\begin{theorem}
\label{thBrezis}
Let $b(x):K\to \R^d $ be a Lipschitz continuous function, where $K$ is a convex closed subset of $\R^d$, such that
\begin{equation}
\label{eqBrezis}
\lim_{h\to 0_+} \frac{d(y+hb(x),K)}{h}=0
\end{equation}
 for any $x\in K$, where $d(z,K)$ denotes the distance between a point $z$ and the set $K$.
 Then $K$ is flow invariant. More precisely, for any $x\in K$ there exists a unique solution $X_t(x)$
 of the equation $\dot X_t(x)=b(X_t(x))$ with the initial condition $x$ that belongs to $K$ for all $t$.
\end{theorem}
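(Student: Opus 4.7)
The plan is to reduce flow invariance to a Gronwall estimate for the squared distance to $K$. First I would Lipschitz-extend $b$ to all of $\R^d$ (via, for instance, the Kirszbraun or McShane extension, preserving the Lipschitz constant; since we only need local behavior, truncating outside a large ball is fine). With this extension, standard Picard--Lindel\"of supplies a unique global $C^1$ solution $X_t(x)$ to $\dot X_t(x)=b(X_t(x))$, $X_0(x)=x$, defined for all $t\in\R$. The entire task is then to show $X_t(x)\in K$ for $t\ge 0$.

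The centerpiece is the scalar quantity $\phi(t)=\tfrac12 d(X_t(x),K)^2$. Since $K$ is closed convex, the metric projection $P_K:\R^d\to K$ is well-defined and $1$-Lipschitz, and the function $y\mapsto \tfrac12 d(y,K)^2$ is of class $C^{1,1}$ with gradient $y-P_K(y)$. Composing with the $C^1$ curve $t\mapsto X_t(x)$ yields
\[
\dot\phi(t)=\bigl(X_t(x)-P_K(X_t(x))\bigr)\cdot b(X_t(x)).
\]
Writing $y=X_t(x)$, $p=P_K(y)$ and splitting $b(y)=b(p)+(b(y)-b(p))$ gives
\[
\dot\phi(t)=(y-p)\cdot b(p)+(y-p)\cdot\bigl(b(y)-b(p)\bigr)\le (y-p)\cdot b(p)+L|y-p|^2=(y-p)\cdot b(p)+2L\,\phi(t),
\]
with $L$ the Lipschitz constant of the extended $b$.

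The crux is therefore to prove $(y-p)\cdot b(p)\le 0$, turning the geometric subtangential condition \eqref{eqBrezis} into an analytic inequality. Here I would use convexity: since $p=P_K(y)$ and $K$ is convex, the supporting half-space $H=\{w:(w-p)\cdot(y-p)\le 0\}$ contains all of $K$. The hypothesis \eqref{eqBrezis} at the point $p\in K$ provides, for each $h>0$, a point $q_h\in K$ with $|q_h-(p+hb(p))|=o(h)$. Since $q_h\in K\subset H$,
\[
0\ge (q_h-p)\cdot(y-p)=h\,b(p)\cdot(y-p)+\bigl(q_h-p-hb(p)\bigr)\cdot(y-p)\ge h\,b(p)\cdot(y-p)-o(h)\,|y-p|.
\]
Dividing by $h$ and letting $h\to 0_+$ yields $b(p)\cdot(y-p)\le 0$, as desired. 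Substituting back gives $\dot\phi(t)\le 2L\,\phi(t)$, and because $\phi(0)=0$ Gronwall's inequality forces $\phi\equiv 0$, i.e.\ $X_t(x)\in K$ for all $t\ge 0$. Uniqueness of the in-$K$ solution is inherited from uniqueness of the Cauchy problem for the extension (or proved directly on $K$ from the Lipschitz bound on $b$).

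I expect the only delicate step to be the passage from \eqref{eqBrezis} to $b(p)\cdot(y-p)\le 0$: it hinges on having an \emph{exact} supporting half-space at the projection point, which is why convexity of $K$ (and not just closedness) is essential, and on the fact that the approximating points $q_h$ supplied by the subtangential hypothesis live in $K$ rather than merely close to it. Everything else is routine, granted that one knows $d(\cdot,K)^2$ is continuously differentiable on $\R^d$ when $K$ is closed and convex.
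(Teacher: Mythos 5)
Your proof is correct. Note first that the paper itself does not prove this statement: it is quoted as the classical Brezis flow-invariance theorem with an explicit referral to \cite{Redheffer} for proofs, so there is no in-paper argument to match against. Your route is a clean, self-contained one that exploits the convexity hypothesis in an essential way: convexity gives you the $C^{1,1}$ regularity of $y\mapsto \tfrac12 d(y,K)^2$ with gradient $y-P_K(y)$, and the variational inequality $(w-p)\cdot(y-p)\le 0$ for $w\in K$ at the projection point $p=P_K(y)$, which is exactly what converts the subtangential hypothesis \eqref{eqBrezis} into the sign condition $b(p)\cdot(y-p)\le 0$; Gronwall then closes the argument. The classical proof in \cite{Redheffer} is different and more general: it works for arbitrary closed (not necessarily convex) $K$ by estimating the upper right Dini derivative of $\delta(t)=d(X_t(x),K)$ directly --- one picks any nearest point $p$ to $X_t(x)$, a point $q_h\in K$ within $o(h)$ of $p+hb(p)$, and bounds $\delta(t+h)\le |X_{t+h}(x)-q_h|\le (1+Lh)\,\delta(t)+o(h)$, then invokes a Gronwall-type lemma for Dini derivatives. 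Your argument buys smoothness of the Lyapunov function and an elementary chain rule at the cost of requiring convexity (which the theorem as stated assumes anyway); the Dini-derivative argument buys generality at the cost of handling a merely Lipschitz, nondifferentiable distance function. Two minor points worth making explicit: the conclusion is forward invariance ($t\ge 0$), which is the standard reading of the theorem, and in \eqref{eqBrezis} the $y$ is evidently a typo for $x$, as your reading assumes.
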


As a direct consequence we get the following extension of Proposition \ref{propdetjump}.

\begin{prop}
\label{propdetjumpdom}
Let $K$ be a convex compact subset of $\R^d$ and  $b:K\to \R^d$, $Y_j:K\to K$ be twice continuously differentiable functions.
Let $b$ satisfy the assumptions of Theorem \ref{thBrezis}. Then  the resolving operators $R_t$ of the Cauchy problem
\eqref{eqdetjump} form a semigroups of contractions in $C(K)$ such that the spaces $C^1(K)$ and $C^2(K)$
are invariant and $R_t$ are uniformly bounded operators in these spaces for $\in [0,T]$ with any $T$.
\end{prop}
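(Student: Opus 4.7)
The plan is to mimic the interaction-representation proof of Proposition \ref{propdetjump}, using Theorem \ref{thBrezis} at the single point where that earlier proof relied on $b$ being defined and generating a smooth flow on all of $\R^d$. The new ingredient is Brezis' theorem; everything else repeats almost verbatim, with $K$ in place of $\R^d$.

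First I would extend $b$ from $K$ to a compactly supported $C^2$ field $\tilde b$ on $\R^d$ (for instance by multiplying a Whitney extension by a smooth cutoff supported near $K$). The ODE $\dot X = \tilde b(X)$ then has a global $C^2$ flow $X_t$ on $\R^d$ by the classical smooth-dependence theorem, and by Theorem \ref{thBrezis} applied to the original $b$ on $K$ (condition \eqref{eqBrezis} is intrinsic to $K$ and is not touched by the extension), this flow preserves $K$. Hence $X_t|_K : K \to K$ is a $C^2$ diffeomorphism whose $C^k$-norms are continuous, and therefore uniformly bounded, on any compact interval $[0,T]$.

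Next I would perform the substitution $f_t(x) = \phi_t(X_t(x))$ as in \eqref{eqdetjump1}, turning \eqref{eqdetjump} into the Cauchy problem $\pa_t \phi_t = \tilde L_t \phi_t$, $\phi_0 = f_0$, with the operator $\tilde L_t$ displayed there. Because $Y_j$ maps $K$ into $K$ by assumption and $X_{\pm t}$ preserves $K$, each $\tilde L_t$ maps $C(K)$, $C^1(K)$, $C^2(K)$ into themselves; the $C^2$-regularity of $Y_j$ and $X_{\pm t}$ makes it a bounded operator on each of these spaces, with operator norms uniformly bounded on $[0,T]$. The transformed problem is then solved by the absolutely convergent Dyson series for the evolution generated by a bounded time-dependent operator, producing bounded resolving operators $\tilde R_t$ on each of the three spaces, uniformly on $[0,T]$. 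Composing back, $R_t f_0 := (\tilde R_t f_0) \circ X_t$, gives the resolving operators of \eqref{eqdetjump}, with invariance of $C^1(K)$ and $C^2(K)$ and the uniform norm-bounds inherited from $\tilde R_t$ and $X_t$. The semigroup property follows from uniqueness for the bounded-generator Cauchy problem, and the contraction in $C(K)$ from the probabilistic interpretation of $R_t$ as the Markov transition operator of a piecewise-deterministic process on $K$ (drifting along $X_t$, jumping to $Y_j(x)$ at unit rate for each $j$).

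The only real obstacle is the smooth dependence of the $K$-restricted flow on its initial point: Brezis' theorem delivers only set-invariance of $K$ and is silent on regularity of $X_t|_K$. The preliminary extension of $b$ to $\R^d$ is what dissolves this, decoupling smoothness (a global $\R^d$ statement from standard ODE theory) from $K$-invariance (supplied separately by Brezis on the original $b$). Once these two are in hand, the rest is just the bounded-generator argument of Proposition \ref{propdetjump} applied in the Banach spaces $C^k(K)$.
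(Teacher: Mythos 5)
Your proof is correct and follows essentially the same route as the paper, which presents Proposition \ref{propdetjumpdom} simply as a direct consequence of the interaction-representation argument of Proposition \ref{propdetjump} combined with the Brezis invariance theorem (Theorem \ref{thBrezis}). Your preliminary $C^2$ extension of $b$ to all of $\R^d$ --- decoupling smooth dependence on initial data (classical ODE theory) from $K$-invariance (Brezis, applied to the original $b$ and transferred to the extended flow by uniqueness for Lipschitz ODEs) --- fills in a detail the paper leaves implicit; the only cosmetic point is that the transformed generator in \eqref{eqdetjump1} involves $X_{-t}$, which need only ever be evaluated on $X_t(K)\subseteq K$, or equivalently one can run the Duhamel series for the bounded perturbation $L$ of the drift semigroup $f\mapsto f\circ X_t$ on $C(K)$ and never invoke the backward flow at all.
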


\section{Appendix C. Position dependent CTRW}
\label{secCTRW}

Here we recall the basic result on the convergence of continuous time random walks (CTRW).

Suppose $T_1^h,T_2^h, \cdots $ is a sequence of i.i.d.
random variables in $\R_+$ such that the distribution of each $T_i^h$
is given by a probability measure $\mu_{time}^h (dt)$
on $\R_+$, that depend on a positive (scaling) parameter $h$. Let
\begin{equation}
\label{eqinversewalk}
N_t^h=\max \{ n: \sum_{i=1}^n T_i^h \le t\}.
\end{equation}

Suppose $X_1^h,X_2^h, \cdots $ is a sequence of i.i.d.
random variables in $\R^d$, such that the distribution of each $X_i^h$
is given by a probability measure $\mu_{space}^h (dt)$, that depends on $h$.
The standard (scaled) {\it continuous time random walk}\index{continuous time random walk (CTRW)} (CTRW)
 is a random process given by the random sum
\[
\sum_{j=1}^{N_t^h} X_i^h.
\]

In position dependent CTRW the jumps $X_i^h$ are not independent, but each $X_i^h$
depends on the position of the process before this jump. The natural general formulation
 can be given in terms of discrete Markov chains as follows.
Let $U_h$ be a transition operator of a discrete time Markov chain $O^h_n(x)$ in $\R^d$
depending on a positive parameter $h$, so that
\begin{equation}
\label{transitoperMar}
U_hf(x)=\E O^h_1(x)=\int f(y) \mu^h(x, dy),
\end{equation}
with some family of stochastic kernels  $\mu^h(x, dy)$ such that $U_h$
is a bounded operator either in the space $C(K)$ with a compact convex subset $K$ of $\R^d$
or in the space $C_{\infty}(\R^d)$ of continuous functions vanishing at infinity.
For our purposes we need only
the operators of the type
\[
U_hf(x)=\E O^h_1(x)=\sum_{j=1}^J  f(Y_j^h(x)) p_j(x)^h,
\]
with a family of continuous mappings $Y_j^h:\R^d\to \R^d$ and the probability laws $\{p_1^h, \cdots, p_J^h\}$.

Suppose $T_1^h,T_2^h, \cdots $ is a sequence of random variables introduced above, and independent of $O^h_n(x)$.
The process
\begin{equation}
\label{ctrwposdep}
 O^h_{N_t^h}(x)
\end{equation}
is a generalized scaled (position dependent)
 {\it continuous time random walk}\index{continuous time random walk (CTRW)} (CTRW) arising from
$U_h$ and $\mu^h_{time}$.

The CTRW were introduced in \cite{MW}. They found numerous applications in physics.
The scaling limits of these CTRW were analysed by many authors, see e.g. \cite{KKU}, \cite{MS},
\cite{MS2}.  The scaling limit for the position dependent CTRW was developed in \cite{Kol7}.
Formally in  \cite{Kol7} it was developed not in full generality, but for the case of the spacial
process $O^h_n(x)$ converging to a stable process. However, the arguments of  \cite{Kol7} were completely
general and did not depend on this assumption. The only point used was that $O^h_n(x)$ converge in
the sense of  Proposition \ref{propconvsemigr} (ii). For completeness let us formulate the result
\cite{Kol7} in a slightly modified version that we need in this paper and present a short proof
with essentially simplified arguments from  \cite{Kol7} (see also Chapter 8 in \cite{Kolbook11}).

As an auxiliary result we need the standard functional limit theorem for the random-walk-approximation
of stable laws, see e.g. \cite{GnedKor} and \cite{MS2} and references therein for various proofs.
\begin{prop}
\label{limtheorstable}
Let a positive random variables $T$ belong to the domain of attraction of a $\be$-stable law,
$\be\in (0,1)$, in the sense that
\begin{equation}
\label{eq1propCTRW}
 \P (T>m)\sim \frac{1}{\be m^{\be}}
\end{equation}
(the sign $\sim$ means here that the ratio tends to $1$, as $m\to \infty$).
Let $T_i$ be a sequence of i.i.d. random variables from
the domain of attraction of a $\be$-stable law and let
\[
\Phi_t^h=\sum_{i=1}^{[t/h]} h^{1/\al} T_i
\]
be a scaled random walk based on $T_i$, $h>0$, and $S_t$ a $\be$-stable L\'evy subordinator, that is
a L\'evy process in $\R_+$ generated by the stable generator
\[
L_{\be}(x)=\int \frac{f(x+y)-f(x)}{y^{1+\be}} dy
\]
(which up to a multiplier  represents the fractional derivative $d^{\be}/d(-x)^{\be}$).
Then  $\Phi_t^h \to S_t$ in distribution, as $h\to 0$.
\end{prop}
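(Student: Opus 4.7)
The plan is to reduce the statement to convergence of Laplace transforms; for nonnegative random variables this is equivalent to weak convergence by the continuity theorem, and the $\be$-stable subordinator is characterized by a particularly simple explicit Laplace exponent. An alternative route within the paper's framework would be Proposition~\ref{propconvsemigr}(ii) applied to the transition operator $U_h f(x) = \E f(x + h^{1/\be} T)$ of the discrete walk, verifying that $(U_h - I)/h$ converges to $L_{\be}$ on a suitable core such as $C^2(\R_+)$; but the Laplace route is more direct since it bypasses the verification of \eqref{eq3propconvsemigr}.

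First I would translate the tail condition $\P(T > m) \sim 1/(\be m^{\be})$ into a small-$\la$ expansion of $\E e^{-\la T}$. Integration by parts gives $1 - \E e^{-\la T} = \la \int_0^{\infty} e^{-\la t} \P(T > t)\, dt$, and splitting at a large threshold together with a Karamata--Tauberian argument yields
\[
1 - \E e^{-\la T} = c \la^{\be} + o(\la^{\be}), \quad \la \to 0+, \qquad c = \Ga(1-\be)/\be.
\]
By the i.i.d.\ structure,
\[
\E e^{-\la \Phi_t^h} = \bigl[\, \E e^{-\la h^{1/\be} T_1}\, \bigr]^{[t/h]} = \bigl[\, 1 - c h \la^{\be} + o(h)\, \bigr]^{[t/h]} \longrightarrow e^{-ct\la^{\be}}
\]
as $h \to 0$. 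On the subordinator side, the L\'evy measure is $dy/y^{1+\be}$ on $(0,\infty)$, and the substitution $u = \la y$ gives $\int_0^{\infty} (1-e^{-\la y}) y^{-1-\be}\, dy = c\la^{\be}$, so $\E e^{-\la S_t} = e^{-ct\la^{\be}}$, matching. One-dimensional convergence in distribution is then immediate from the continuity theorem.

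To upgrade to process convergence in the Skorokhod sense, I would exploit the independent-increments structure: applying the identical computation to any increment $\Phi_t^h - \Phi_s^h$, a sum of $[t/h] - [s/h]$ i.i.d.\ scaled copies of $T$, yields convergence of all finite-dimensional Laplace transforms to the corresponding ones of $S_t$, hence finite-dimensional weak convergence; tightness in the Skorokhod space $D([0,\infty), \R_+)$ then follows from standard arguments for triangular arrays of nonnegative i.i.d.\ summands in a stable domain of attraction, together with monotonicity of the paths of $\Phi_t^h$.

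The main delicate point is the uniformity of the $o(\la^{\be})$ error term: to conclude $[1 - c h \la^{\be} + o(h)]^{[t/h]} \to e^{-ct\la^{\be}}$ one needs the error to be uniform on compact sets of $\la$, which is ensured by the regular variation of the tail together with the standard continuity lemma $\log(1+z) = z + O(|z|^2)$ for $z$ near $0$. All remaining steps are routine.
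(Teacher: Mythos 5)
Your proposal is correct, but note that the paper does not actually prove Proposition \ref{limtheorstable}: it is invoked as a classical result with pointers to \cite{GnedKor} and \cite{MS2}, so there is no internal proof to compare against. Your Laplace-transform argument is the standard self-contained route and the computations check out: the Abelian/Karamata step turns the tail normalization $\P(T>m)\sim 1/(\be m^{\be})$ into $1-\E e^{-\la T}\sim \frac{\Ga(1-\be)}{\be}\la^{\be}$, and the substitution $u=\la y$ in $\int_0^\infty(1-e^{-\la y})y^{-1-\be}dy$ produces exactly the same constant $\Ga(1-\be)/\be$, so the limit matches the subordinator generated by the un-normalized Lévy measure $y^{-1-\be}dy$ as stated (this constant-matching is the one place where a slip would be easy, and you got it right; you also correctly read the typo $h^{1/\al}$ as $h^{1/\be}$). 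Two minor remarks: for one-dimensional convergence the uniformity of the $o(\la^{\be})$ term in $\la$ that you worry about is not needed, since the continuity theorem only requires pointwise convergence of $\E e^{-\la \Phi_t^h}$ for each fixed $\la>0$; and the upgrade to Skorokhod convergence, while true (for triangular arrays of row-wise i.i.d. nonnegative summands, convergence of the one-dimensional marginals of the partial-sum process already yields functional convergence), is not actually required by the paper --- the only use made of Proposition \ref{limtheorstable} in the proof of Proposition \ref{propCTRW} is the convergence $\P(\Phi_s^h>t)\to\P(S_s>t)$ of one-dimensional marginals, which your first computation already delivers. What your approach buys over the paper's citation is a short, explicit verification that the normalization in \eqref{eq1propCTRW} is the one consistent with the generator $L_{\be}$ as written, which the paper leaves implicit.
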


The next result is from \cite{Kol7}, though modified and simplified.
\begin{prop}
\label{propCTRW}
Let the random variables $T_i^h=h^{1/\be} T_i$, where i.i.d. random variables $T_i$ belong to the
domain of attraction of a $\be$-stable law, $S_t$ be a $\be$-stable L\'evy suboridinator
and
\[
\si_y=\max \{t: S_t \le y\}
\]
be its inverse process.
Let a family of contractions \eqref{transitoperMar}
satisfy \eqref{eq2propconvsemigr} with an operator $L$ generating a Feller process $F_t$.
Then
\[
\E U_h^s|_{s=[N_t^h/h]} \to \E F_{\si_t}, \quad h\to 0,
\]
strongly as contraction operators in $C(K)$ or $C_{\infty}(\R^d)$.
\end{prop}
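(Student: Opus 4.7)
The plan is to decouple the problem into a time part and a space part by exploiting the independence of the jump times $\{T_i^h\}$ from the Markov chain $O_n^h(x)$. Conditioning on the time clock and using this independence, one writes
\[
\E (U_h)^{N_t^h} f(x) = \E \bigl[ g_h(h N_t^h)(x) \bigr], \qquad g_h(s) := (U_h)^{[s/h]} f,
\]
where the outer expectation is taken only over $N_t^h$, and analogously $\E F_{\si_t} f(x)$ involves only the randomness of $\si_t$. The task then reduces to showing $\E g_h(h N_t^h) \to \E F_{\si_t} f$ strongly in $C(K)$ (or $C_\infty(\R^d)$).

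First I would establish the time-part convergence $h N_t^h \to \si_t$ in distribution. Observe that $h N_t^h = h \max\{n : \Phi_{nh}^h \le t\}$ with $\Phi_s^h = \sum_{i=1}^{[s/h]} h^{1/\be} T_i$, so $h N_t^h$ is (up to a $o(1)$ rounding term) the right-continuous inverse of $\Phi_\cdot^h$ evaluated at level $t$. The standard functional strengthening of Proposition \ref{limtheorstable} gives $\Phi_\cdot^h \Rightarrow S_\cdot$ in the Skorohod sense, and because the stable subordinator $S_s$ ($\be \in (0,1)$) is strictly increasing a.s.\ with no flat portions, its first-passage functional is a.s.\ continuous on the limit trajectories. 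The continuous mapping theorem then yields $h N_t^h \Rightarrow \si_t$. As a by-product, $\{h N_t^h\}_{h>0}$ is tight, with $\P(h N_t^h > M)$ uniformly small as $M\to\infty$, inherited from the a.s.\ finiteness of $\si_t$.

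Second, I would use Proposition \ref{propconvsemigr}(ii): for $f\in D$,
\[
\sup_{s \le M} \| g_h(s) - F_s f \| \le (\ka_h + \ep_h)\, M\, \|f\|_D \to 0,
\]
so $g_h$ converges to $F_\cdot f$ uniformly on every compact $s$-interval with an explicit rate. The desired convergence then follows by splitting
\[
\E g_h(h N_t^h) - \E F_{\si_t} f = \E\bigl[ g_h(h N_t^h) - F_{h N_t^h} f \bigr] + \bigl(\E F_{h N_t^h} f - \E F_{\si_t} f\bigr).
\]
The first piece is controlled by the uniform rate on $\{h N_t^h \le M\}$ plus the trivial bound $2\|f\|\cdot \P(h N_t^h > M)$ on the complement, both of which can be made arbitrarily small. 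The second piece vanishes by weak convergence $h N_t^h \Rightarrow \si_t$ together with the strong continuity of $s \mapsto F_s f$ and boundedness. Density of the core $D$ and the fact that all operators involved are contractions upgrade strong convergence from $D$ to the whole space.

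The main obstacle I anticipate is the continuous-mapping step that turns the functional random-walk limit theorem into convergence of the hitting-time inverse; it hinges on strict monotonicity of the limiting subordinator and on an interchange of the deterministic rescaling $[s/h]$ with a random index. Having the explicit rate in Proposition \ref{propconvsemigr}(ii), rather than only pointwise convergence $g_h(s) \to F_s f$, is essential here, because the argument $h N_t^h$ is random and we need the approximation to hold uniformly as it ranges over bounded sets. Everything else is routine bookkeeping.
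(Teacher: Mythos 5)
Your decomposition is exactly the one the paper uses: the same splitting into a piece comparing $(U_h)^{[s/h]}f$ with $F_sf$ at the (independent) random time, handled by truncation to $\{hN_t^h\le M\}$ plus the uniform-on-compacts rate from Proposition \ref{propconvsemigr}(ii), and a second piece comparing $\E F_{hN_t^h}f$ with $\E F_{\si_t}f$; the density-of-core upgrade is also the same. Where you genuinely diverge is in the second piece. You prove $hN_t^h\Rightarrow\si_t$ by invoking a \emph{functional} limit theorem $\Phi^h_{\cdot}\Rightarrow S_{\cdot}$ in Skorohod space and the continuous mapping theorem for the first-passage inverse (relying on strict monotonicity of the stable subordinator), and then pass to $\E F_{hN_t^h}f\to\E F_{\si_t}f$ via weak convergence plus norm-continuity and boundedness of $s\mapsto F_sf$. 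The paper avoids all of that machinery: it integrates by parts to write the second piece as $\int_0^\infty Le^{sL}f\,\bigl(\P(\si_t\le s)-\P(N_t^h\le s)\bigr)\,ds$, uses the elementary duality $\{\si_t\le s\}=\{S_s>t\}$ and $\{N_t^h\le s\}=\{\Phi_s^h>t\}$ to reduce everything to convergence of the one-dimensional marginals $\P(\Phi_s^h>t)\to\P(S_s>t)$ for each fixed $s$ -- which is precisely what the stated Proposition \ref{limtheorstable} provides -- and finishes by dominated convergence (here $\|Le^{sL}f\|\le\|Lf\|$ since $f$ is in the domain and $F_s$ is a contraction). Note that Proposition \ref{limtheorstable} as stated is only a marginal convergence result, so your appeal to its ``standard functional strengthening'' imports an external theorem the paper does not need; on the other hand your route is the standard one in the CTRW literature and would generalize more readily, while the paper's integration-by-parts trick is more elementary and self-contained but hides a mild integrability/dominating-function check in the $ds$-integral. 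Both arguments are sound; yours is correct but costs more machinery at that one step.
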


\begin{remark} This Proposition directly implies the following statement about the processes:
the subordinated Markov chains \eqref{ctrwposdep}, that is the scaled CTRW, converge in distribution
to the process generated by $L$ and subordinated by the inverse of the L\'evy $\be$-subordinator.
\end{remark}

\begin{proof}
By the density arguments it is sufficient to show that
\[
\|\E U_h^{[s/h]}|_{s=N_t^h}f - \E F_{\si_t}f\| \to 0
\]
for functions $f$ from the domain of $L$.
We have
\[
\|\E U_h^{[s/h]}|_{s=N_t^h}f - \E F_{\si_t}f\| \le I+II,
\]
with
\[
I=\|\E U_h^{[s/h]}|_{s=N_t^h}f - \E F_{N^h_t}f\|, \quad II=\|\E F_{N^h_t}f - \E F_{\si_t}f\|.
\]
To estimate I we write
\[
I=\int_0^{\infty}(U_h^{[s/h]}f-F_sf)\mu^h_t(ds)=\int_0^K(U_h^{[s/h]}f-F_sf)\mu^h_y(ds)
+\int_K^{\infty}(U_h^{[s/h]}f-F_sf)\mu^h_t(ds),
\]
where $\mu^h_t$ is the distribution of $N_t^h$. Choosing $K$ large
enough we can make the second integral arbitrary small uniformly in $h$. And then by \eqref{eq4propconvsemigr}
we can make the first integral arbitrary small by choosing small enough $h$ (and uniformly in $t$ from
compact sets).
It remains II. Integrating by parts we get the following:
\[
II=\| \E e^{N_t^h L}f-\E e^{\si_t L} f\|
\]
\[
=\|\int_0^{\infty} \frac{\pa}{\pa s} (e^{sL}f) (\P(\si_t\le s) -\P(N_t^h\le s)) \, ds\|
\]
\[
= \|\int_0^{\infty}  L e^{sL}f (\P(S_s> t) -\P(\Phi_s^h>t)) \, ds\|.
\]
By \eqref{limtheorstable}, $\P(S_s> t) \to \P(\Phi_s^h>t)$ as $h\to 0$. Therefore $II \to 0$ by the dominated
convergence, as $h\to 0$.
\end{proof}

\begin{remark}
From this proof it is seen how to get some explicit rates of convergence. We are not going to give details.
\end{remark}

It is well known, see e.g. \cite{SZ} and detailed presentations in monographs \cite{Meerbook}, \cite{Kolbook19},
that the subordinated limiting evolution described by the operators
$\E F_{\si_t}$ solves fractional in time differential equations. Namely,
under the conditions of Proposition \ref{propCTRW}, the function $f_t(x)=\E (F_{\si_t}f)(x)$
satisfies the equation
\begin{equation}
\label{eqfraceq}
 D^{\be}_{0+\star}f_t(x)=Lf(x), \quad f_0(x)=f(x),
\end{equation}
where  $D^{\be}_{0+\star}$ is the Caputo-Djerbashian derivative of order $\be$ acting on the variable $t$, and the
operator $L$ acts on the variable $x$.

Recall that  a L\'evy subordinator is a process generated by the operator
\begin{equation}
\label{eqLevysubord}
L_{\nu}f(x)=\int_0^{\infty} f(x+y)-f(x)) \nu (dy),
\end{equation}
where $\nu$ is a one-sided L\'evy measure, that is , it satisfies the condition $\int \min(1,y) \nu(dy)<\infty$.
Proposition \ref{propCTRW} is based on the central limit for stable laws stating the convergence $\Phi_t^h \to S_t$
of random walks approximations to a stable L\'evy subordinator. If scaled random walks $\Phi_t^h$ are designed in such a way
that they approximate an arbitrary L\'evy subordinator, that is,
$\Phi_t^h \to S_t$ with $S_t$ generated by \eqref{eqLevysubord},
then similar arguments show that
\[
\E U_h^s|_{s=[N_t^h/h]} \to \E F_{\si_t}, \quad h\to 0,
\]
where
\[
\si_y=\max \{t: S_t \le y\}, \quad
\N_y^h=\max \{t: \Phi_t^h \le y\}.
\]
In this case the functions $f_t(x)=\E (F_{\si_t}f)(x)$
satisfy the equation
\begin{equation}
\label{eqfraceq1}
 D^{(\nu)}_{0+\star}f_t(x)=L_tf(x), \quad f_0(x)=f(x),
\end{equation}
see e.g. \cite{Kol7}, \cite{Kol15},
where  $D^{(\nu)}_{0+\star}$ is the generalised Caputo-type mixed fractional derivative
defined by the equation
\begin{equation}
\label{eqfraceq2}
 D^{(\nu)}_{0+\star}f_t=\int_0^t (f_{t-s}-f_t)\nu(ds)+ (f_0-f_t)\int_t^{\infty} \nu(ds).
\end{equation}
The derivative $D^{\be}_{0+\star}$ in \eqref{eqfraceq} corresponds to $\nu(dy)=y^{-1-\be} dy$.

\end{document}